\newtheorem{lemma}{Lemma}
\newtheorem{proposition}{Proposition}
\newtheorem{problem}{Problem}
\newtheorem{theorem}{Theorem}
\newtheorem{corollary}{Corollary}
\DeclareMathOperator{\num}{num}
\newcommand{\comment}[1]{}
\newcommand{\AR}[1]{\textcolor{blue}{AR: #1}}
\newcommand{\Aa}{\mathcal{A}}
\newcommand{\MM}{\mathcal{M}}
\newcommand{\UU}{\mathcal{U}}
\newcommand{\VV}{\mathcal{V}}
\newcommand{\NN}{\mathbb{N}}
\renewcommand{\norm}[1]{|#1|}
\begin{document}

\title{Reachability in Fixed {VASS}:\\ Expressiveness and Lower Bounds\thanks{
This
work is part of a project that has received funding from the
European Research Council (ERC) under the European Union’s Horizon
2020 research and innovation programme (Grant agreement No. 852769,
ARiAT).}}
\author[]{Andrei Draghici}
\author[]{Christoph Haase}
\author[]{Andrew Ryzhikov}
\affil[]{Department of Computer Science, University of Oxford, Oxford, United Kingdom}
\date{}
\maketitle
\begin{abstract}
  The recent years have seen remarkable progress in establishing the complexity of the reachability problem for vector addition systems with
  states (VASS), equivalently
  known as Petri nets. Existing work 
  primarily considers the case in which both
  the VASS as well as the initial and target
  configurations are part of the input. In this
  paper, we investigate the reachability problem
  in the setting where the VASS is fixed
  and only the initial configuration
  is variable. 
  We show that fixed VASS fully express
  arithmetic on initial segments of the
  natural numbers. It
  follows that there is a very weak reduction from
  any fixed such number-theoretic predicate 
  (e.g.\ primality or square-freeness) to reachability
  in fixed VASS where configurations are 
  presented in unary. 
  If configurations are given in binary, we
  show that there is a fixed VASS with five
  counters
  whose reachability problem is PSPACE-hard.
\end{abstract}

\section{Introduction}
Vector addition systems with states (VASS), equivalently known as Petri nets, are a fundamental model of computation. A VASS comprises a finite-state controller
with a finite number of counters ranging over the non-negative integers. When a transition is
taken, counters can be updated by adding an integer, provided that the resulting 
counter values are all non-negative; otherwise the transition blocks. Given two configurations
of a VASS, each consisting of a control 
state and an assignment of values to the counters, the reachability problem asks whether
there is a path in the infinite transition system connecting the two configurations.
The VASS
reachability problem has been one of the most intriguing problems in
theoretical computer science and studied for more than fifty years.
In the 1970s, Lipton showed this problem EXPSPACE-hard~\cite{Lipton1976}.
Ever since the
1980s~\cite{Mayr84,Kosaraju82,LS15}, the reachability problem
has been known to be decidable, albeit with non-elementary complexity.
This wide gap between the EXPSPACE lower bound and a non-elementary
upper bound persisted for many years, until a recent series of papers
established various non-elementary lower bounds~\cite{czerwinski2020reachability,CO21,Leroux21}, and resulted
in showing the VASS reachability problem Ackermann-complete. The
lower bounds for this result require an unbounded number of counters, but even for a
fixed number of counters, the Petri net reachability problem requires
non-elementary time~\cite{CO21,Leroux21}. Further lower bounds for resource-bounded
or structurally restricted VASS have also been established recently~\cite{CO22}.

\paragraph{Main results.}
The main focus of this paper is to investigate the reachability problem for 
\emph{fixed VASS},
where the VASS under consideration is fixed and only the initial configuration forms the input to a reachability query. Here, it is crucical
to distinguish between the encoding of numbers used to represent counter values
in configurations: in \emph{unary encoding}, the representation length of a
natural number $n\in \N$ is its magnitude $n$ whereas in \emph{binary encoding}
the bit length of $n\in \N$ is $\lceil \log n\rceil + 1$. 
It turns out that establishing meaningful lower bounds under unary encoding 
of configurations
is a rather delicate issue; a full discussion is deferred to \Cref{sec-unaryVASS}.
As a first step, we establish a tight correspondence between reachability
in VASS and the first-order theory of initial segments of 
$\N$ with the arithmetical relations
addition ($+$) and multiplication ($\times$). An initial segment in $\N$ is
a set $\underline{N}=\{0,\ldots,N\}$ for some arbitrary but fixed $N\in \N \setminus \{ 0 \}$. Relations definable in this family of structures are known as \emph{rudimentary 
relations} and contain many important number-theoretic relations, c.f.~\cite{EM98}
and the references therein. For instance, the fixed formula $\mathrm{PRIMES}(x) 
\equiv \neg(x=0) \wedge \neg(x=1) \wedge \forall y<x\, \forall z<x\, \neg (x = y \times z)$
evaluates to true in $\underline N$ for all prime numbers up to $N$. 

We show
that for any fixed rudimentary relation $\Phi(x_1,\ldots, x_k)$, there is a fixed VASS
$V$ with a number of counters linear in the size of $\Phi$ such that $\Phi(n_1,\ldots,n_k)$
evaluates to true in $\underline N$ if and only if $V$ reaches a final control state
with all counters zero starting in an initial configuration in which the value
of the counter $c_i$ is obtained by application of some fixed polynomial $p_i$ to $n_1,\ldots,
n_k$ and $N$. It thus follows that reachability in fixed VASS under unary encoding
of configurations is hard for evaluating any rudimentary relation under unary 
encoding of numbers, and hence fixed VASS can, e.g. determine primality and 
square-freeness of a number given in unary. From those developments, it is already
possible to infer that reachability in fixed VASS with configurations encoded 
in binary is hard for every level of the polynomial hierarchy by a reduction
from the validity problem for short Presburger arithmetic~\cite{NguyenP22}. 
In fact, we can establish a PSPACE lower bound for reachability in VASS with
configurations encoded in binary by a generic reduction allowing to simulate
space-bounded computations of arbitrary Turing machines, encoded as natural
numbers, via a fixed VASS with only fice counters. A recent conjecture
of Jecker~\cite{Jecker23} states, for every VASS $V$, there exists a fixed constant
$C$ such that if a target configuration is reachable from an initial configuration
then there exists a witnessing path whose length is bounded by $C\cdot m$, where
$m$ is the maximum constant appearing in the initial and final configurations.
Thus, assuming Jecker's conjecture, reachability in fixed VASS under binary 
encoding of confiugrations would be PSPACE-complete.

\paragraph{Related work.}
To the best of our knowledge, the reachability problem for fixed VASS 
has not yet been systematically explored. Closest to the topics of this
paper the work by Rosier and Yen~\cite{RY86}, who conducted a multi-parameter
analysis of the complexity of the boundedness and coverability problems
for VASS.

However, the study of the computation power
of other fixed machines has a long history in the theory of computation. The
two classical decision problems for machines of a computation model are \emph{membership} (also called the \emph{word problem}) and \emph{reachability}. Membership asks whether a given machine accepts a given input; the (generic) reachability problem asks
whether given an initial and a target configuration,
 there is a path in the transition system induced by a given machine from the
 initial configuration to the target configuration.
The most prominent example of a reachability problem is the halting problem for different kinds of machines. Classically, the computational complexity of such problems assumes that both the computational model and its input word (for membership) or configurations (for reachability) are part of the input. However, these are two separate parameters. For example, in database theory, the database size and the query size are often considered separately, since the complexity of algorithms may depend very differently on these two parameters, and the sizes of these two parameters in applications can also vary a lot \cite{Vardi1982}. One approach to study such phenomena is to fix one of either the database or the query. More generally, the field of parameterised complexity studies the
computational difficulty of a problem with respect to multiple parameters
of the input.

Returning to our setting, this means fixing either the machine or its input. In this paper, we concentrate on the former. The question can then be seen as follows: in relation to a problem such as membership or reachability, which machine is the hardest one in the given computation model? For some models, the answer easily follows from the existence of universal machines, i.e., machines which are able to simulate any other machine from their class. A classical example here is a universal Turing machine. Sometimes the ability to simulate all other machines has to be relaxed, for example as for Greibach's hardest context-free language \cite{Greibach1973}. Greibach showed that there exists a fixed context-free grammar such that a membership query for any other context-free grammar can be efficiently reduced to a membership query for this grammar. Similar results are known for two-way non-deterministic pushdown languages~\cite{Rytter1981,Chistikov2022}.

\section{Preliminaries}
We denote by $\Z$ and $\N$ the set of integers and non-negative
integers, respectively. For $N\in\N$ we write $\underline{N}$ to denote the set $\{0,\dots,N\}$. By $[n, m]$ we define the set of integers between $n$ and $m$: $[n, m] = \{k \in \Z \mid n \le k \le m\}$.

\subsection{Counter automata}

A \emph{$d$-counter automaton} is a tuple $\Aa = (Q, \Delta, \zeta, q_0, q_f)$, where $Q$ 
is a finite set of states, $\Delta \subseteq Q \times \Z^d \times Q$ is the transition relation, 
$\zeta: \Delta \to \mathcal [1,d] \cup \{ \top \}$ is a function indicating which counter is tested
for zero along a transition ($\top$ meaning no counter is tested), $q_0 \in Q$ is the initial state, and $q_f\in Q$ is the 
final state. We assume that $q_f$ does not have outgoing transitions.

The set of configurations of $\Aa$ is $C(\Aa) \defeq \{ (q,n_1,\ldots,n_d) : q\in Q, n_i \in \N, 
1\le i\le n \}$. 
A run $\varrho$ of a counter automaton $\Aa$ from a configuration $c_1\in C(\Aa)$ to
$c_{n+1} \in C(\Aa)$ is a sequence of configurations interleaved with transitions
\[
\varrho = c_1 \xrightarrow[]{t_1} c_2 \xrightarrow[]{t_2} \dots \xrightarrow[]{t_n} c_{n+1}
\]
such that for all $1\le i\le n$, $c_i=(q,m_1,\dots,m_d)$ and $c_{i+1} = (r,m_1',\dots,m_d')$,
\begin{itemize}
\item $t_i = (q,(z_1,\ldots,z_d), r)$ with $m_j'=m_j + z_j$ for all $1\le j\le d$; and
\item $m_j = 0$ if $\zeta(t_i) = j$.
\end{itemize}

Observe that we can
without loss of generality
assume that each transition $t \in \Delta$ is of one of the two types:
\begin{itemize}
\item either no counter is tested for zero along $t$, that is, $\zeta(t) = \top$, in which case we call it \emph{an update transition};
\item or $t$ does not change the values of the counters, that is, $\zeta(t) = j$ and $t = (q, (0, \ldots, 0), r)$ in which case we call it \emph{a zero-test transition}.
\end{itemize}

We say that $\Aa$ is a \emph{vector addition system with states (VASS)} if $\Aa$ cannot
perform any zero tests, i.e., $\zeta$ is the constant function assigning $\top$ to 
all transitions. We can now formally define the main decision problem we study in this paper.

\begin{problem}\label{prob:acceptance-fixed-vass}
  \textsc{Fixed VASS Zero-Reachability} \\
  \textbf{Fixed:} $d$-VASS $\mathcal{A}$.\\
  \textbf{Input:} A source configuration $\vect x\in \N^d$. \\
  \textbf{Output:} Yes if and only if $\mathcal{A}$ has a run from $(q_0,\vect x)$ to $(q_f,\vect 0)$.
\end{problem}

\subsection{Counter programs}
\begin{wrapfigure}{r}{0.25\textwidth}
  \begin{algorithmic}[1]
    \State $\Gotoo{2}{4}$
    \State ~~ $x\minuseq 3$
    \State ~~ $\Goto 1$
    \State $x\pluseq 1$
    \State $\textbf{halt}$
  \end{algorithmic}
  \caption{Example of a counter program.}
  \label{fig:cp-example}
\end{wrapfigure}
For ease of presentation, we use the notion of counter programs
presented e.g. in \cite{czerwinski2020reachability}, which are equivalent to VASS,
and allow for presenting VASS (and counter automata) in a serialised way. A counter program is a primitive imperative program that executes arithmetic operations on a finite number of counter variables. Formally, a \emph{counter program} 
consists of a finite set $\mathcal X$ of global counter variables
(called \emph{counters} subsequently for brevity) ranging over the
natural numbers, and a finite sequence $1, \ldots, m$ of line numbers (subsequently
\emph{lines} for brevity), each associated with an instruction manipulating the values of the counters or a control flow
operation. Each instruction is of one the following forms:
\begin{itemize}
    \item $x\pluseq c$ (increment counter $x$ by constant $c \in \N$),
    \item $x\minuseq c$ (decrement counter $x$ by constant $c \in \N$),
    \item $\Gotoo{L_1}{L_2}$ (non-deterministically jump to the
      instruction labelled by $L_1$ or $L_2$),
    \item $\Skip$ (no operation).
\end{itemize}
We write $\Goto{L}$ as an abbreviation for $\Gotoo{L}{L}$, and also
allow statements of the form $\Goto{L_1}$ \textbf{or}
$L_2$ \textbf{or} $\cdots$ \textbf{or} $L_k$. Moreover, the line with the largest number is a special instruction \textbf{halt}.

An example of a counter program is given
in \Cref{fig:cp-example}. This counter program uses a single counter
$x$ and consists of five lines.
Starting in Line~1, the program non-deterministically loops and
decrements the counter $x$ by three every time, until it increments $x$ by one
and terminates. 

To be able to compose counter programs, we describe the operation of substitution, which substitutes a given line (which we always assume to have a \textbf{skip} instruction) of a counter program with the ``code'' of another counter program. Formally, let $C_1, C_2$ be counter programs with $m_1$ and $m_2$ lines respectively. The result of substituting Line $k$, $1 \le k \le m_1 - 1$, of $C_1$ with $C_2$ is a counter program $C'_1$ with $m_1 + m_2-1$ lines. The instruction corresponding to a line $L$, $1 \le L < m_1 + m_2$ are defined as follows: 
\begin{itemize}
    \item if $L < k$, it is the instruction of Line $L$ in $C_1$,

\item if $k \le L < m_2 + k$, it is the instruction of Line $L - k+1$ in $C_2$, 

\item if $L = m_2 + k-1$, it is the instruction \textbf{skip},

\item if $m_2 + k \le L$, it is the instruction of Line $L - k$ in $C_1$.
\end{itemize}

The line numbers in \textbf{goto} instructions are changed accordingly. We also consider a substitution of several counter programs. When specifying counter programs, to denote substitution of another counter program we just write its name instead of an instruction in a line. Also, we write $C_1;C_2$ for
\begin{algorithmic}[1]
  \State $C_1$
  \State $C_2$
\end{algorithmic}
and $C_1\textbf{ or }C_2$ as syntactic sugar for the counter program:
\begin{algorithmic}[1]
  \State $\Gotoo{2}{4}$
  \State $C_1$
  \State $\Goto 5$
  \State $C_2$
  \State $\Skip$
\end{algorithmic}

When $C$ is a counter program, we write
$\Looop$ $C$ as an abbreviation for the counter program
\begin{algorithmic}[1]
  \State $\Gotoo{2}{4}$
  \State $C$
  \State $\Goto 1$
  \State $\Skip$
\end{algorithmic}
Hence, the counter program in \Cref{fig:cp-example} corresponds to
$\Looop\ x \minuseq 3;\,x \pluseq 1$. We use indentation to mark the scope of the \textbf{loop} instruction.

\subsection{Runs of counter programs.}

Exactly as in the case of VASS, a \emph{configuration} of a counter program is an element $(L, f) \in \N \times \N^{\mathcal{X}}$, where $L\in \N$ is a program line with a
corresponding instruction, and $f\colon \mathcal X \to \N$ is a counter valuation. The semantics of counter programs are defined in a natural way: after executing the
instructions on the line $L$, we either non-deterministically go to one of the specified lines (if the instruction on line $L$ is a \textbf{goto} instruction), and otherwise, we go to the line $L + 1$. After executing the last line, we stop.

A \emph{run} of a counter program is a sequence $\varrho\colon (L_1,\f_1) \xrightarrow{} (L_2,\f_2) \xrightarrow{}
\cdots \xrightarrow{}(L_n,\f_n)$ of configurations defined naturally according to the described semantics. For example, $(1,\{x\mapsto 7\}) \xrightarrow{} (4,\{x\mapsto 7\}) \xrightarrow{} (5,\{x \mapsto 8\})$
is a run of the counter program in \Cref{fig:cp-example}. 

One can view a counter program as a VASS by treating line numbers as states and defining transitions as specified
by the counter program, each labelled with the respective instruction. It is also easy to see how to convert a VASS into a counter program.

Given a run
$\varrho\colon (L_1,\f_1) \xrightarrow{} (L_2,\f_2) \xrightarrow{}
\cdots \xrightarrow{} (L_n,\f_n)$, we say that $\varrho$
is \emph{zero-terminating} if $L_1 = 1$, the instruction on line $L_n$ is \textbf{halt}, and  $\f_n(x)=0$ for all $x\in\mathcal{X}$. We denote by $\val_{end}(\varrho,x)
\defeq \f_n(x)$
the value of the counter $x$ once the run terminates. Sometimes, we
also want to talk about the value of a counter at a specific point
during the execution of a run and define $val_i(\varrho,x)$ to be the value of the counter $x$ right before we execute the instruction on line $i$ in the run $\varrho$ for the first time, i.e.\ 
$\val_i(\varrho,x)\defeq f_k(x)$, where $k$ is the smallest index such that $L_k=i$. For instance, in the example above,
we have $\val_{end}(\varrho,x)=8$ and $\val_4(\varrho,x)=7$. We often construct
counter programs that admit exactly one run~$\varrho$ from a given initial
configuration to a target configuration. In such a setting, we may omit the reference to
$\varrho$ and simply write $\val_{end}(x)$ and $\val_i(x)$. The effect
$\eff(\varrho)\colon \mathcal X \to \Z$ of a run $\varrho$ starting in $(1,f_1)$ and ending in
$(n,f_n)$ is a map such that
$\eff(\varrho, x)=f_n(x)-f_1(x)$ for all $x\in \mathcal X$.

In the context of counter programs, the zero-reachability problem takes the following form.

\begin{problem}\label{prob:acceptance fixed C}
  \textsc{Fixed counter program zero-reachability} \\
  \textbf{Fixed:} Counter program $\mathcal{C}$.\\
  \textbf{Input:} An initial configuration $\vect x\in \N^m$. \\
  \textbf{Output:} Yes if and only if $\mathcal{C}$ has a zero-terminating run from $\vect x$.
\end{problem}

\section{Implementation of zero tests}\label{sec-gadgets}
The structure of runs in arbitrary counter programs is very complicated and hard to analyse, and hence it is difficult to force a counter program to have a prescribed behaviour. One of the common ways to deal with this issue is to introduce some restricted zero tests, that is, some gadgets that guarantee that if a run reaches a certain configuration, then along this run, the values of some counters are zero at prescribed positions. In this section, summarising~\cite{czerwinski2020reachability},
we describe such a gadget in the case where the values of counters are bounded by a given number. The number of zero tests that can be performed this way is also bounded. For a counter $v$, we call this gadget $\Zero{v}$, and later on we will use it as a single instruction to test that the value of $v$ is zero before executing it.

In Section \ref{sec-unaryVASS}, the assumption that the values of the counters are bounded comes from the the fact that the corresponding values of the variables in rudimentary arithmetic are bounded. In \Cref{sec-binaryVASS-new}, we enforce this property for more powerful models
of computation and show how to simulate these models with VASS.

Let $N\in\N$ be an upper bound on the value of a counter $v$. Then, we can introduce a counter $\hat{v}$ and enforce the
invariant $\f(v) +\f(\hat{v})=N$ to hold in all the configurations of any run of our counter programs. We achieve this by ensuring that every
line containing an instruction of type $v\pluseq c$ must be followed by a line with a $\hat{v}\minuseq c$ instruction. From
now on, we make the convention that the instruction $v\pluseq c$ is
an abbreviation for $v\pluseq c;\hat{v}\minuseq c$. This
allows us to remove the hat variables from our future counter programs
whenever it is convenient for us, which will ease readability. So, if we choose an initial configuration in which $\f(v)+\f(\hat{v})=N$, we have that this invariant holds 
whenever the zero-test gadget is invoked.  

 We introduce auxiliary counters $u_1, u_2$ that
 will be tested for zero only in the final configuration,
 and hence have no hat counterpart. 
 In the following, the instruction $\Zero{v}$ denotes
 the following gadget:
\begin{algorithm}[H]
\caption{$\Zero{v}$}
\begin{algorithmic}[1]
\Loop
\State $v\pluseq1;\hat{v}\minuseq1;u_2\minuseq1$
\EndLoop
\Loop
\State $v\minuseq1;\hat{v}\pluseq1;u_2\minuseq1$
\EndLoop
\State $u_1\minuseq2$
\end{algorithmic}
\end{algorithm}
Consider an initial configuration in which $\f_1(u_1)=2n$ and $\f_1(u_2)=2n\cdot N$ for some $n>0$. Initially, it is true that $\f(u_2)=
\f(u_1)\cdot N$. 
\begin{lemma}[\cite{czerwinski2020reachability}]
  There exists a run of the counter program $\Zero{v}$ that starts in a configuration with $\f(u_2)\ge 2$ , $\f(u_2)=\f(u_1)\cdot N$, and ends in a configuration with $\f(u_2)=\f(u_1)\cdot N$ if and only if
  $\f(v) = 0$ in the initial configuration.
\end{lemma}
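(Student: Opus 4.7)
The plan is to parametrise any candidate run of $\Zero{v}$ by a pair $(a,b)$, where $a$ is the number of executions of the body of the first loop and $b$ of the second. Every run is then uniquely determined by $(a,b)$, with net effects $v \mapsto v + (a-b)$, $\hat v \mapsto \hat v + (b-a)$, $u_2 \mapsto u_2 - (a+b)$, and $u_1 \mapsto u_1 - 2$ (the last coming from the trailing $u_1\minuseq 2$ instruction).

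The first step is to translate the pre/post condition $f(u_2) = f(u_1)\cdot N$ into an arithmetic constraint on $(a,b)$: since $u_1$ decreases by exactly $2$ and $u_2$ by exactly $a+b$, preservation of the identity $u_2 = u_1\cdot N$ across the gadget forces $a+b = 2N$. The second step is to exploit the standing invariant $f(v) + f(\hat v) = N$, built in by the convention that $v\pluseq c$ abbreviates $v\pluseq c;\,\hat v\minuseq c$; this pins both $v$ and $\hat v$ to the interval $[0,N]$ throughout. Writing $v_0$ for the initial value of $v$, non-negativity of $\hat v$ at the end of the first loop forces $a \le N - v_0$, and non-negativity of $v$ at the end of the second loop forces $b \le v_0 + a$. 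Adding these two bounds and invoking $a+b=2N$,
\[
2N \;=\; a + b \;\le\; (N - v_0) + (v_0 + a) \;=\; N + a \;\le\; 2N - v_0,
\]
so $v_0 \le 0$ and hence $v_0 = 0$, which gives the forward direction.

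For the converse, assuming $v_0 = 0$ I would take $a = b = N$: the first loop raises $v$ from $0$ to $N$ while $\hat v$ descends from $N$ to $0$, the second loop reverses this, $u_2$ drops by exactly $2N$, and the final $u_1\minuseq 2$ restores $u_2 = u_1\cdot N$. All transitions are enabled provided $f(u_1) \ge 2$ (and hence $f(u_2) \ge 2N$), which is implicit in the hypotheses. I do not anticipate a real obstacle: the argument is one line of linear-arithmetic book-keeping once the parameters $(a,b)$ are isolated. The only conceptual point worth emphasising in the write-up is that the three quantitative invariants --- $u_2 = u_1\cdot N$, $u_1$ decreasing by exactly $2$ per gadget invocation, and $v + \hat v = N$ --- are calibrated precisely so that the universal decrement budget of $2N$ on $u_2$ is attainable if and only if $v$ starts at $0$.
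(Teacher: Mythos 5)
Your proof is correct and takes essentially the same route as the paper's: both arguments hinge on the invariant $\f(v)+\f(\hat v)=N$ bounding the number of iterations of each loop, so that the $u_2$-decrement budget of $2N$ (forced by $u_1$ dropping by~$2$ and the preservation of $\f(u_2)=\f(u_1)\cdot N$) is attainable exactly when $v$ starts at zero. You simply make explicit the linear-arithmetic bookkeeping on the loop counts $(a,b)$ that the paper's one-line proof leaves implicit, and you correctly read the precondition as really requiring $\f(u_1)\ge 2$ (the lemma's ``$\f(u_2)\ge 2$'' appears to be a typo).
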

\begin{proof}
  Because of the
  invariant $\f(v)+\f(\hat{v})=N$, the loop on Line~1 can
  decrease the value of $u_2$ by at most $N$ and the loop on line
  3 can also decrease the value of $u_2$ by at most $N$. Moreover,
  this can only happen if $val_1(v)=0$. 
\end{proof}
From a configuration with $\f(u_2)=\f(u_1)\cdot N$, a run ``incorrectly'' executing the  $\Zero{v}$ routine can only reach a configuration with $\f(u_2)>\f(u_1)\cdot N$. Observe that from such a configuration, we can never reach a configuration respecting the invariant $\f(u_2)=\f(u_1)\cdot N$ if we only use $\Zero{v}$ instructions to change the values of $u_1,u_2$. Now, consider a counter $v$ and a counter program $C$ that modifies
the values of counters $u_1$ and $u_2$ only through the $\Zero{v}$
instruction. If we start in a configuration in which $\f(u_1)=2n$ and $\f(u_2)=2n\cdot N$ for some $n>0$, and we are guaranteed that any run of $C$
cannot execute more than $n$ $\Zero{v}$ instructions, then after any
run of $C$, we have that $\f(u_2)=\f(u_1)\cdot N$ only if the value of
the counter $v$ was zero at the beginning of every $\Zero{v}$
instruction. If all the variables that we are interested in are bounded by the same value $N$, we can use a single pair of counters $u_1,u_2$ to perform zero tests on all our variables. We subsequently call the counters $u_1$ and
$u_2$ \emph{testing counters}. To summarise, in VASS, we can perform $n$ zero tests on 
counters bounded by $N$ via reachability queries.

Given a configuration $(L,\f)$, we say that $(L,\f)$ is a \emph{valid
  configuration} if $\f$ respects the condition that $\f(u_2) = \f(u_1)\cdot N$. A \emph{valid run} is a
run that starts in a valid configuration and ends in a valid
configuration. Also, a counter program \emph{admits} a valid run if
there exists a valid run that reaches the terminal instruction \textbf{halt}. 

Having introduced all relevant definitions, we now
introduce components, which are counter programs
acting as sub-routines that ensure that, if
invoked in a configuration fulfilling the invariants
required for zero tests, upon returning those
invariants still hold. Formally, a \emph{component} is a counter program such that:
\begin{itemize}
\item there is a polynomial $p$ such that every valid run performs at most $p(N)$ zero tests on all variables; and
\item the values of the testing counters are updated only by $\Zero{}$
  instructions.
\end{itemize}
We
conclude this section with 
\Cref{lemma:component composition}, which states that sequential
composition and non-deterministic branching of components yields
components. We will subsequently implicitly make use of this
lemma without referring to it.
\begin{lemma}\label{lemma:component composition}
  If $C_1,C_2$ are components then both $C_1;C_2$ and $C_1\textbf{ or }C_2$ are also components.
\end{lemma}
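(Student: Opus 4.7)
The plan is to verify the two defining conditions of a component independently for each construction. For the second condition---that the testing counters $u_1, u_2$ are updated only by $\Zero{}$ subroutines---both constructions preserve it trivially: the serial composition $C_1;C_2$ merely concatenates the instruction lists of $C_1$ and $C_2$, while the $\textbf{or}$ construction only inserts $\textbf{goto}$ and $\Skip$ lines, none of which touch any counter. So if neither $C_1$ nor $C_2$ modifies $u_1,u_2$ outside a $\Zero{}$ call, neither does the composition.

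For the polynomial bound, let $p_1$ and $p_2$ be polynomials witnessing the bound for $C_1$ and $C_2$. For $C_1\ \textbf{or}\ C_2$ any valid run executes exactly one of the two branches, so it performs at most $\max(p_1(N), p_2(N)) \le p_1(N) + p_2(N)$ zero tests, giving a polynomial bound.

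The only case needing a small argument is $C_1;C_2$. Given a valid run $\varrho$, split it into a prefix $\varrho_1$ executing $C_1$ and a suffix $\varrho_2$ executing $C_2$, and let $(L,\f)$ be the configuration at the junction. The key claim is that this intermediate configuration is itself valid, i.e.\ $\f(u_2) = \f(u_1)\cdot N$. This follows from the monotonicity observation already made in the paragraph preceding the lemma: because $u_1$ and $u_2$ are modified only by $\Zero{}$ subroutines, and each such invocation decreases $u_1$ by exactly $2$ and $u_2$ by at most $2N$, the quantity $\f(u_2) - \f(u_1)\cdot N$ is non-decreasing along \emph{any} run. Since it equals $0$ at the start of $\varrho$ (validity of the initial configuration) and $0$ at the end of $\varrho$ (validity of the final configuration), it must stay $0$ throughout, and in particular at the midpoint. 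Hence $\varrho_1$ is a valid run of $C_1$ and $\varrho_2$ is a valid run of $C_2$, contributing at most $p_1(N)$ and $p_2(N)$ zero tests respectively, for a total bound of $p_1(N) + p_2(N)$.

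The only conceptually nontrivial step is the validity of the intermediate configuration, which I expect to be the main obstacle to state cleanly; it is nevertheless handled directly by the monotonicity of $\f(u_2) - \f(u_1)\cdot N$, essentially already spelled out in the discussion above the lemma, so no new technology is needed.
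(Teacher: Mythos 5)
The paper states Lemma~\ref{lemma:component composition} without proof, so there is no authorial argument to compare against; you have in effect reconstructed the intended argument, and you have done so correctly. The one genuinely non-trivial point is exactly where you say it is: to bound the zero tests of $C_1;C_2$ one must establish that the junction configuration is valid, and your appeal to the monotonicity of $\f(u_2) - \f(u_1)\cdot N$ is the right tool, matching the paper's informal remark immediately before the lemma (``from such a configuration, we can never reach a configuration respecting the invariant \ldots if we only use $\Zero{v}$ instructions'').

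Two small precision notes, neither of which threatens the argument. First, $\f(u_2)-\f(u_1)\cdot N$ is non-decreasing only when sampled at the boundaries of complete $\Zero{}$ invocations; inside an invocation the two loops push the quantity down before the final $u_1\minuseq 2$ restores it. Since the junction between $C_1$ and $C_2$ is a $\Skip$ line and hence never mid-gadget, your conclusion that the junction configuration is valid stands, but you should phrase the invariant as ``non-decreasing across complete $\Zero{}$ invocations'' rather than ``along any run.'' Second, by the paper's definition a valid run need not reach $\textbf{halt}$: a valid run of $C_1;C_2$ might stop inside $C_1$ and never enter $C_2$, and a valid run of $C_1\textbf{ or }C_2$ might stop mid-branch. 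These degenerate cases are easier, not harder, since the observed portion is still a valid run of the relevant subprogram and the individual bound $p_i(N)$ applies; it is just worth flagging them so that the case split is exhaustive.
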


\section{Rudimentary arithmetic and unary VASS}\label{sec-unaryVASS}
In this section, we provide a lower bound for the zero-reachability problem for a VASS when the input configuration is encoded in unary. We observe that there is a close relationship between this problem and deciding
validity of a formula of first-order arithmetic with addition and multiplication on an initial segment of $\N$, also known as
rudimentary arithmetic~\cite{EM98}. 

\subsection{Rudimentary arithmetic}
For the remainder of this section, all the structures we consider are relational.
We denote by $\textbf{FO}(+,\times)$ the first-order theory of the structure $\langle\N,+,\times\rangle$, where $+$ and $\times$ are the natural ternary addition and multiplication relations. When interpreted over initial segments of $\N$, 
the family of the first-order theories is known rudimentary
arithmetic. In order to have a meaningful reduction
to \emph{fixed} VASS, we are interested in the following
decision problem:
\begin{problem}\label{prob:unary fixed FO}
  \textsc{Fixed $\textbf{FO}(\underline N, +,\times)$ Validity} \\
    \textbf{Fixed:} $\Phi(\vect x)\in \textbf{FO}(+,\times)$.\\
  \textbf{Input:} $N\in\N$ and $\vect{x}\in \underline{N}^n$ given in unary.\\
  \textbf{Output:} YES if and only if $\langle \underline{N},+,\times\rangle\models \Phi(\vect x)$.
\end{problem}
Note that, in particular, for a predicate $x + y = z$ to
hold, all of $x, y, z$ must be at most $N$. It thus
might seem that after we fix $N$, a formula $\Phi(\vect x)$ can only express facts about numbers up to $N$. However,
as discussed in~\cite{schweikardt2005arithmetic}
and~\cite{EM98}, this can be improved to quantifying
over variables up to $N^d$ for any fixed~$d$ using
$(N+1)$-ary representations of numbers. In other words, for any fixed $d$ and formula $\Phi(\vect x)$, there exists a formula $\Phi'(\vect x)$ such that for any $N\in\N$ and $\vect x\in \underline{N}^n$, $\langle \underline{N},+,\times\rangle\models \Phi(\vect x)$ iff $\langle \underline{N}^d,+,\times\rangle\models \Phi'(\vect x)$.
\subsection{Reductions between unary languages}
In order to study decision problems whose input is, for some constant $k$, a $k$-tuple of numbers presented in unary, and hence to analyse languages corresponding to them, we need 
a notion of reductions that are weaker compared to the 
standard ones that are widely used in computational complexity. The reason is that classical problems 
involving numbers represented in unary, such as \textsc{Unary Subset Sum}~\cite{EJT10}, have as an input a sequence
of variable length of numbers given in unary. Hence, languages of such problems are in fact binary, as we need a delimiter symbol to separate the elements of the sequence. It is not clear how a reasonable reduction from such a language to a language consisting of
$k$-tuples of numbers for a \emph{fixed} $k$ would look like. Conversely, arithmetic properties of a single number, e.g.\ primality or square-freeness, require very low
computational resources if the input is represented in unary. Hence, the notion of a reduction between such ``genuinely unary'' languages has to be very weak.

In view of this discussion, we introduce the following kind
of reduction. 
Given $k>0$, a \emph{$k$-tuple unary language} is a subset $L\subseteq \N^k$. 
We say that $L$ is a tuple unary language if $L$ is a $k$-tuple unary
language for some $k>0$. Let $L \subseteq \N^k$ and $M \subseteq \N^\ell$ be tuple 
unary languages, we say that $L$ \emph{arithmetically reduces} to $M$ if there are fixed polynomials $p_1,\ldots,p_\ell\colon \N^k \to \N$
such that $(m_1,\ldots,m_k)\in L$ if and only if $(p_1(m_1,\ldots,m_k),\ldots,p_\ell(m_1,\ldots,
m_k)) \in M$.

We believe that this reduction is sensible for the
following informal reasons. Polynomials can be
represented as arithmetic circuits. To the
best of our knowledge, there are no
known lower bounds for, e.g. comparing the
output of two arithmetic circuits with input gates
with constant value one~\cite{ABKM09}, suggesting that
evaluating a polynomial is a computationally weak
operation. Moreover, in the light of sets of
numbers definable in rudimentary arithmetic,
it seems implausible that applying a polynomial
transformation makes, e.g. deciding primality of 
a number substantially easier.

For a formula $\Phi$, let $\mathcal{L}_{\Phi}$ be the tuple unary language of yes-instances for \textsc{Fixed $\textbf{FO}(\underline N, +,\times)$ Validity}. Also, for a counter program $C$, define $\mathcal{L}_{C}$ as the tuple unary language of yes-instance for the \textsc{Fixed counter program zero-reachability} problem. The remainder
of this section is devoted to proving the following
theorem.
\begin{theorem}\label{thm:main-reduction-unary}
For every formula $\Phi$ of $\textbf{FO}(\underline N,+,\times)$, there exists a counter program~$C$ such that $\mathcal{L}_{\Phi}$ arithmetically reduces to $\mathcal{L}_{C}$. 
\end{theorem}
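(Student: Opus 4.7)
The plan is a structural induction on $\Phi$, preceded by two preprocessing steps. First, I push all negations down to the atoms using de Morgan's laws, so negation appears only on atoms $x+y=z$ and $x\times y = z$; each negated atom then rewrites as a disjunction of strict inequalities such as $x+y<z \vee x+y>z$, and strict inequalities reduce to equality tests on slightly modified counters, eliminating the negation symbol entirely. Second, via the reformulation from~\cite{schweikardt2005arithmetic,EM98} already recalled in the paper, I replace $\Phi$ by an equivalent formula $\Phi'$ whose variables range over $\underline{M}$ for $M \defeq N^d$ with $d$ depending only on $\Phi$; the value $M$ will serve as the uniform hat-counter bound enforced by the construction of Section~\ref{sec-gadgets}.

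For every subformula $\Psi(y_1,\ldots,y_r)$ I would inductively construct a component $C_\Psi$ whose free variables are held on counters bounded by $M$, each paired with its hat-counter, and which admits a valid run reaching its exit line precisely when $\underline{M}\models \Psi(y_1,\ldots,y_r)$. The atomic cases are handled by standard decrement-and-add routines on the input counters that invoke polynomially many $\textsc{Zero}$ gadgets. Disjunction compiles to the \textbf{or} construct and conjunction to sequential composition; since each sub-call destroys the counters it is handed, every sub-call is preceded by a \emph{refill} subcomponent that restores each free-variable counter from an auxiliary mirror copy while simultaneously rebuilding the mirror from a backup, so every sub-call sees a fresh copy of its input. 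Existential quantification $\exists y<x\,\Psi$ is realised by non-deterministically guessing $y$ through a loop that increments a fresh counter $y$ while decrementing a copy of $x$, with a $\textsc{Zero}$ test on the resulting remainder certifying $y<x$, before invoking $C_\Psi$. Universal quantification $\forall y<x\,\Psi$ iterates $C_\Psi$ on the values $y=0,1,\ldots,x-1$ with a refill between iterations, and a closing $\textsc{Zero}$ test on a depleted copy of $x$ certifies that exactly $x$ iterations took place.

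Because $\Phi$ is fixed, both the quantifier depth and the number of atomic subformulas are constants depending only on $\Phi$, so the number of $\textsc{Zero}$ invocations along any valid run of the top-level program is bounded by a fixed polynomial $p(N)$; initialising the testing counters to $2p(N)$ and $2p(N)\cdot M$ then makes the zero-test gadget faithful throughout. \Cref{lemma:component composition} packages the pieces into a single top-level component $C$. Its required initial counter values are fixed polynomials in $(N,n_1,\ldots,n_k)$: the testing counters depend only on $N$; each input $n_i$ is placed on its dedicated counter, its hat-counter receives $M-n_i$, and its mirror receives $n_i$; every remaining auxiliary counter starts at zero. This is the arithmetic reduction $\mathcal{L}_\Phi \to \mathcal{L}_C$ required.

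The step I expect to be the main obstacle is accounting for the cost of the refill machinery: every refill itself consumes $\textsc{Zero}$ budget and nests inside potentially many universal quantifiers, so $p(N)$ must be chosen large enough to absorb the deepest refill-quantifier nesting without destroying polynomiality in $N$. A related subtlety is guaranteeing that each bounded-quantifier loop executes \emph{exactly} the required number of iterations rather than exiting early — enforced by the closing $\textsc{Zero}$ test on a depleted copy of the bound — so that a prematurely curtailed loop cannot be mistaken for a witness to a universal statement.
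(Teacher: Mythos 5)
Your plan — dedicated components for each atomic predicate, a component for each quantifier, sequential composition for conjunction, non\hyp{}deterministic branching for disjunction, with the zero\hyp{}test machinery of Section~\ref{sec-gadgets} threaded through via a budget polynomial in~$N$ — is the same strategy the paper uses. The two genuine departures are (a)~you induct on the structure of $\Phi$ while the paper first normalises to prenex form and handles the quantifier block and the quantifier\hyp{}free matrix separately, and (b)~you eliminate negated atoms by rewriting into strict inequalities, whereas the paper keeps negated literals and builds dedicated $\neg\textsc{Addition}$ and $\neg\textsc{Multiplication}$ components. Both choices are sound; prenex form mainly simplifies the zero\hyp{}test bookkeeping (one gets a clean bound $(K\cdot(2N+12))^M$ by unrolling the quantifier block), while your structural induction trades that for the refill machinery you already flag as delicate. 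Your ``refill from a mirror'' and the paper's $\textsc{Copy}[x,x']$ (copy the input counter to a scratch counter before each literal) serve the same purpose.

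There are three concrete gaps. First, the preprocessing step that relativises $\Phi$ to $\underline{M}$ with $M=N^d$ is not needed and appears to misread the paper's remark: that remark says one can \emph{simulate} quantification over $\underline{N^d}$ \emph{inside} $\underline{N}$, not that one should enlarge the domain. Since Problem~\ref{prob:unary fixed FO} interprets $\Phi$ over $\langle\underline{N},+,\times\rangle$, every variable, constant, and intermediate counter value stays in $\underline{N}$ (the predicate $x\times y=z$ only holds when $z\le N$), so $N$ itself is the correct hat\hyp{}counter bound, and there is nothing to gain from $M$. Second, your existential gadget as written does not certify $y<x$: incrementing $y$ while decrementing a copy $x'$ of $x$ and then zero\hyp{}testing $x'$ certifies $y=x$. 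Since quantifiers in $\langle\underline{N},+,\times\rangle$ range over all of $\underline{N}$, you in fact want the paper's simpler $\textsc{Exists}[v]$ (non\hyp{}deterministically increment $v$, bounded automatically by the hat invariant, no zero test); if you do want bounded $\exists y<x$ for your negation elimination, you must pre\hyp{}decrement $x'$ once before the guessing loop and drop the zero test. Third, you stop at ``$C$ admits a valid run iff $\Phi$ holds,'' but $\mathcal{L}_C$ is defined via zero\hyp{}reachability: you still need the paper's final step of appending loops that drain the testing counters (valid iff the invariant $\f(u_2)=\f(u_1)\cdot N$ holds) and then all remaining counters, so that a run is zero\hyp{}terminating exactly when it is valid.
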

This theorem can be viewed in two different contexts. On the one hand, it relates the computational complexity of the two problems using a very weak reduction as described above. On the other hand, it also relates the expressivity of two formalisms. Namely, the set of satisfying assignments for formulas of rudimentary arithmetic is at most as expressive as the composition of polynomial transformations with the sets of initial configurations for zero-reachable runs in counter programs. In particular, it shows that fixed VASS
can, up to a polynomial transformation, decide
number-theoretic properties such as primality, 
square-freeness, see~\cite{EM98} for further examples.
We now proceed showing how fixed VASS can simulate
formulas of $\textbf{FO}(\underline N,+,\times)$.
\subsection{Components for arithmetic operations}
\label{subsec-arithm-gadgets}
Since there is no straightforward way to model negation with a counter program, we need to provide gadgets for both the predicates $+$ and $\times$ of $\textbf{FO}(\underline N,+,\times)$ and their negation and hence design one separate component for each literal. However, these components may change the values of the counters representing first-order variables, and since a first-order variable might appear in multiple literals, we first provide a gadget to copy the value of a designated counter to some auxiliary counter before it can be manipulated.

\paragraph{Copy.}
We provide a counter program $\textsc{Copy}[x,x']$
that will be invoked several times with the following properties:
\begin{enumerate}
    \item $\textsc{Copy}[x,x']$ admits a valid run if and only if $val_{end}(x')=val_{end}(x)=val_1(x)$; and 
    \item $\textsc{Copy}[x,x']$ is a component. 
\end{enumerate}

\begin{algorithm}[H]
\caption{$\textsc{Copy}[x,x']$}\label{alg:zero-test}
\begin{algorithmic}[1]
    \Loop 
    \State $x'\minuseq1$
    \EndLoop
    \State \Zero{$x'$}
    \ccomment{Reset auxiliary variable $x'$}
    \Loop
    \State $x\minuseq1;x'\pluseq1;t\pluseq1$
    \EndLoop
    \State \Zero{$x$}
    \Loop
    \State $t\minuseq1;x\pluseq1$
    \EndLoop
    \State \Zero{$t$}
\end{algorithmic}
\end{algorithm}
The purpose of the loop on Line~1 is to ensure that $val_4(x')=0$. We do not need to do this for the auxiliary counter $t$ because any valid run sets $val_{end}(t)=0$. Next, observe that $\textsc{Copy}[x,x']$ admits a valid run if and only if the loop on Line~4 is executed $val_1(x)$ many times and the loop on Line~7 is executed $val_4(t)=val_1(x)$ many times which happens if and only if $val_{end}(x')=val_{end}(x)=val_1(x)$. Moreover, any valid run performs exactly 3 zero tests, so $\textsc{Copy}[x,x']$ is a component. The auxiliary variable $t$ is reused between different $\textsc{Copy}$ components, so that we do not introduce new counters every time we use the $\textsc{Copy}$ component.

\paragraph{Addition.}
We define a counter program $\textsc{Addition}[x,y,z]$ that enables us to check whether the value stored in counter $z$ is equal to the sum of the values stored in $x,y$: 
\begin{enumerate}
\item $\textsc{Addition}[x,y,z]$ admits a valid run if and only if $val_1(x)+val_1(y) = val_1(z)$;
\item $\textsc{Addition}[x,y,z]$ is a component; and 
\item the effect of $\textsc{Addition}[x,y,z]$ is zero on counters $x,y,z$.
\end{enumerate}
\begin{algorithm}[H]
\caption{$\textsc{Addition}[x,y,z]$}
\begin{algorithmic}[1]
\State $\textsc{Copy}[x,x'];\textsc{Copy}[y,y'];\textsc{Copy}[z,z']$
\Loop
\State $z'\minuseq1$
\State $x'\minuseq1\textbf{ or }y'\minuseq1$
\EndLoop
\State $\textbf{zero-test}(x');$ $\textbf{zero-test}(y');$
$\textbf{zero-test}(z')$
\end{algorithmic}
\end{algorithm}

It is easy to see that the first property is fulfilled by the counter program and 
that $\textsc{Addition}[x,y,z]$ is a component because any run performs exactly 12 zero tests (9 tests on Line~1, and 3 tests on Line~5). The last property is true based on the properties of $\textsc{Copy}$.

The component for $\neg\textsc{Addition}[x,y,z]$ that checks whether the value stored in counter $z$ is different from the sum of the values stored in counters $x,y$ is defined similarly, see \Cref{app:not addition} for details. 

\paragraph{Multiplication.}
We now define a counter program $\textsc{Multiplication}[x,y,z]$ with the following properties:
\begin{enumerate}
\item it admits a valid run if and only if $val_1(z)=val_1(x)\cdot val_1(y)$;
\item $\textsc{Multiplication}[x,y,z]$ is a component; and 
\item the effect of $\textsc{Multiplication}[x,y,z]$ is zero on counters $x,y,z$.
\end{enumerate}
\begin{algorithm}[H]
\caption{$\textsc{Multiplication}[x,y,z]$}
\begin{algorithmic}[1]
\State $\textsc{Copy}[x,x'];\textsc{Copy}[y,y'];\textsc{Copy}[z,z']$
\Loop
\Loop
\State $x'\minuseq1;t\pluseq1;z'\minuseq1$
\EndLoop
\State $\Zero{x'}$
\Loop
\State $x'\pluseq1;t\minuseq1;$
\EndLoop
\State $\Zero t$
\State $y'\minuseq1$
\EndLoop
\State $\Zero{y'};\Zero{z'}$
\end{algorithmic}
\end{algorithm}

Observe that the loop on Line~3 of any valid run must be executed $val_1(x)$ many times in order to pass the zero test on Line~5. The effect of this loop is then to decrease the value of $z'$ by $val_1(x)$ and to set the value of $t$ to $val_1(x)$. Next, the loop on Line~6 must be executed $val_5(t)=val_1(x)$ many times to pass the zero test on Line~8, so the value of $x'$ is set to $val_1(x)$ and the value of $t$ is set again to zero. Hence, the effect of Lines~3-8 is to subtract $val_1(x)$ from the value of $z'$ without changing the value of $x'$. Finally, any valid run passes the test on Line~10 if and only if the loop on Line~2 is executed $val_1(y)$ many times, which happens if and only if $val_1(z) = val_1(x)\cdot val_1(y)$. Since, we argued that the loop on Line~2 is executed $val_1(y)$ many times, we conclude that any valid run of $\textsc{Multiplication}[x,y,z]$ performs at most $2N+9$ zero tests, so $\textsc{Multiplication}[x,y,z]$ is a component. Again, the last property is ensured by the properties of $\textsc{Copy}$.
The definition of 
$\neg\textsc{Multiplication}[x,y,z]$ is similar, see \Cref{app:not multiplication} for futher details.

\subsection{Components for quantification}

We define the last components that we need in order to prove \Cref{thm:main-reduction-unary}. These components allow us to existentially and
universally quantify over variables in a bounded range.

We begin with the existential quantifier and we design a 
counter program
$\textsc{Exists}[v]$ with the following properties:
\begin{enumerate}
    \item for every $n\in\underline{N}$, $\textsc{Exists}[v]$ admits a valid run
      $\varrho$ such that $\val_{end}(\varrho,v)=n$;
    \item $\textsc{Exists}[v]$ is a component. 
\end{enumerate}
We define $\textsc{Exists}[v]$ as follows:
\begin{algorithmic}[1]
\Loop
\State $v\pluseq1$
\EndLoop
\State $\textbf{skip}$
\end{algorithmic}
It is easy to see that both properties hold, since $v$
cannot attain a value larger than $N$.

While the component used for simulating existential 
quantification can be sequentially composed with
a component for a subformula, universal quantification
requires directly integrating the component over
whose variable we universally quantify.
Let $C[v]$ be a component that may access the counter
$v$, test it for zero, and change its value on intermediate steps, but
has overall net effect zero on counter $v$. We write
$\textsc{ForAll}[v]:C[v]$ 
for the following counter program:
\begin{algorithmic}[1]
\Loop
\State $C[v]$
\State $v\pluseq1$
\EndLoop
\State \Zero{$\hat{v}$}
\end{algorithmic}
The properties of $\textsc{ForAll}[v]:C[v]$ are as follows:
\begin{enumerate}
\item it admits a valid run if and only if for all
  $u\in\underline{N}$, $C$ has a valid run with $\val_1(v)=u$; and
\item $\textsc{ForAll}[v]:C[v]$ is a component.
\end{enumerate}

Notice that the instruction on Line~4 tests if $val_4(v)=N$. Thus, any valid run that passes the test on Line~4 must be able to execute $C[v]$ for all values of $v\in\underline{N}$. Moreover, since $C[v]$ is a component, we know that it executes a bounded number of zero test $B$, so $\textsc{ForAll}[v]:C[v]$ executes at most $N\cdot B$ many zero tests and it is thus a component.

\subsection{Putting it all together}
Having defined all the building blocks above, we now
prove \Cref{thm:main-reduction-unary}, which is a result of the following lemma.
\begin{lemma}\label{prop:main-reduction-unary}
  For any formula $\Phi(\vect x)$ of $\textbf{FO}(\underline N, +,\times)$, there exists a component $C$ and polynomials $p_1,\ldots,p_k:\N^{(n+1)}\rightarrow \N$ such that for any $N\in\N$ and $\vect x\in\N^n$, $\langle\underline{N},+,\times\rangle\models \Phi(\vect x)$ if and only if $C$ admits a valid run from the initial configuration $(p_1(N,\vect x),\ldots,p_k(N,\vect x))$.
\end{lemma}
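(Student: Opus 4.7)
The plan is to proceed by structural induction on $\Phi$ after rewriting it in negation normal form, so that negation appears only in front of the atomic predicates $+$ and $\times$, for which the paper already supplies dual components $\textsc{Addition}$, $\neg\textsc{Addition}$, $\textsc{Multiplication}$ and $\neg\textsc{Multiplication}$. Each first-order variable $x_i$ of the formula is associated with a dedicated counter storing the value $x_i$, together with a hat counter storing $N-x_i$; the $\textsc{Copy}$ component guarantees that these are left untouched by every atomic component, so they may safely be read by any number of literals or quantifier scopes.

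In the base cases a literal is realised by the corresponding arithmetic component. For the inductive step, conjunction is realised by sequential composition $C_1;C_2$, disjunction by the non-deterministic branching $C_1\textbf{ or }C_2$, existential quantification $\exists v\,\psi(v)$ by $\textsc{Exists}[v];C_\psi$, and universal quantification $\forall v\,\psi(v)$ by $\textsc{ForAll}[v]:C_\psi[v]$. Using the correctness statements listed for each gadget together with \Cref{lemma:component composition}, a straightforward induction shows that the resulting counter program $C_\Phi$ admits a valid run from the prescribed initial configuration if and only if $\langle\underline{N},+,\times\rangle\models\Phi(\vect x)$, and that $C_\Phi$ is itself a component.

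To define the polynomials $p_1,\ldots,p_k$, fix an enumeration of the counters of $C_\Phi$. For every free-variable counter put $p_i(N,\vect x)=x_i$ and $p_{\hat\imath}(N,\vect x)=N-x_i$; all other auxiliary counters introduced by sub-components receive initial value $0$, except for the bound-variable hat counters which start at $N$ so that the invariant $v+\hat v = N$ holds at entry. For the testing counters we account for the total number of zero tests: by induction one obtains a polynomial bound $P_\Phi(N)$ (constants for atoms, sums for $\land,\lor$, a factor $N{+}1$ for each universal quantifier, an additive constant for each existential), and we then set $p_{u_1}(N,\vect x)=2P_\Phi(N)$ and $p_{u_2}(N,\vect x)=2P_\Phi(N)\cdot N$, ensuring both the invariant $u_2=u_1\cdot N$ required by the zero-test gadget and a sufficient budget of zero tests for any valid run of $C_\Phi$.

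The main obstacle is carrying through the induction the additional invariant that every component we build has net effect zero on each free-variable counter and its hat companion. This is exactly what makes the $\textsc{ForAll}$ construction iterate correctly through all values in $\underline{N}$: if a recursive call could leave $v$ (or any other variable counter) perturbed, the enclosing loop would no longer enumerate $\underline{N}$ faithfully. The property is built into the atomic components via $\textsc{Copy}$ and is preserved by $;$ and $\textbf{or}$ composition; for quantifiers it is preserved by placing a short reset preamble in front of the body, analogous to the decrement-to-zero loop opening the $\textsc{Copy}$ component, that brings every locally introduced auxiliary counter (and the quantified variable together with its hat companion on each iteration of $\textsc{ForAll}$) back to its canonical initial value. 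With this invariant in place, the correctness of each gadget lifts to $C_\Phi$ and \Cref{prop:main-reduction-unary} follows, which in turn yields \Cref{thm:main-reduction-unary} by taking the polynomials described above.
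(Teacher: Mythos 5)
Your proposal follows essentially the same blueprint as the paper's proof: realise each literal by the arithmetic gadgets, simulate $\land$ and $\lor$ by sequential composition and non-deterministic branching, and wrap the matrix in $\textsc{Exists}$ and $\textsc{ForAll}$ gadgets, with the free variables, their hat counterparts and the two testing counters initialised by fixed polynomials in $N$ and $\vect{x}$. The one visible difference is organisational: the paper first moves $\Phi$ into prenex normal form, so all quantifier gadgets are nested at the outermost level around a single quantifier-free component $C_\varphi$, whereas you proceed by structural induction on a negation-normal-form formula and therefore allow quantifiers to appear under Boolean connectives. Both normalisations are without loss of generality, and neither route changes the constructions or the accounting of zero tests in any essential way; your initialisation $u_1 = 2P_\Phi(N)$, $u_2 = 2P_\Phi(N)\cdot N$ is in fact the literal instantiation of the scheme from \Cref{sec-gadgets}.

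The one place where you are in danger of being imprecise is the re-entrance of quantifier gadgets, which you correctly identify as the delicate invariant. As the gadgets are written, $\textsc{ForAll}[v]$ leaves $v = N$ and $\textsc{Exists}[v]$ leaves $v$ at whatever value was guessed; if an enclosing $\textsc{ForAll}$ loop re-enters one of them, the guessed or enumerated range is wrong (the $\textsc{ForAll}$ would then pass its final $\Zero{\hat v}$ without ever executing its body). Your fix is a reset, which is the right idea, but the phrasing ``reset the quantified variable together with its hat companion on each iteration of $\textsc{ForAll}$'' would, if applied literally to $\textsc{ForAll}[v]$'s own counter $v$, prevent $v$ from ever reaching $N$ and break the gadget. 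What is needed is a reset loop (decrementing $v$ to $0$ followed by $\Zero{v}$) at the \emph{head} of each $\textsc{Exists}[v]$ and $\textsc{ForAll}[v]$ gadget, executed once before entering the iteration/enumeration loop, so that every invocation starts from $v=0$, $\hat v = N$, exactly like the opening reset loop of $\textsc{Copy}$. With that clarification your inductive invariant (net effect zero on all variable counters, with every local counter returned to its canonical value) goes through, and the argument is sound.
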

\begin{proof}
    Assume with no loss of generality that $\Phi(\vect x)$ is in prenex form, i.e. $\Phi(\vect x)\equiv\exists \vect y_1\forall \vect y_2\ldots Q_m\vect y_m\varphi(\vect x,\vect y_1,\ldots,\vect y_m)$ and that the negations are pushed to the predicate level. Let $M$ be the number of quantified variables and $K$ be the number of literals in $\Phi(\vect x)$. Also, let $\vect a$ be the constants of $\Phi(\vect x)$.
The counter variables of the component $C$ are 
\begin{itemize}
    \item $\vect x$ representing the free variables of $\Phi(\vect x)$,
    \item $\vect y$ representing the quantified variables of $\Phi(\vect x)$,
    \item $\vect c$ representing the constants of $\Phi(\vect x)$; and 
    \item the relevant auxiliary variables $t,\vect x',\vect y',\vect c'$ used inside the components for $\textsc{Copy}[x,x']$, $\textsc{Multiplication}[x,y,z]$ and $\neg\textsc{Multiplication}[x,y,z]$.
\end{itemize}
such that
\begin{itemize}
\item $\f_1(x)=x$ and $\f_1(\hat{x})=N-x$ for all
  variables $x$ occurring in $\vect x$;
\item $\f_1(v)=0$ and $\f_1(\hat{v})=N$ for all the counters representing quantified variables, constants and auxiliary counters; and
\item for the testing counters, $\f_1(u_1)=2N$ and $\f_1(u_2)=2N\cdot (K\cdot(2N+12))^M$.
\end{itemize}
The counter program $C$ starts with a component $C_0$ that initialises the constant variables $\vect c$ of $\Phi(\vect x)$ by a sequence of instruction of the type $c\pluseq a$, for every constant appearing in $\Phi(\vect x)$. 

If $\varphi(\vect x, \vect y_1,\ldots,\vect y_m)$ contains a single literal, we can construct a fixed component $C'$ such that the literal is true under the induced assignment by a valid configuration $(L,\f)$ if and only if $C'$ admits a valid run starting in $(L,\f)$ by using the previously defined components. Thus, we can construct a component $C_\varphi$ that
admits a valid run starting from some valid configuration $(L,\f)$ if
and only if $\varphi$ evaluates to true under the assignment induced by
$(L,\f)$, by simulating a conjunction via sequential composition and
disjunction by non-deterministic branching.

We now turn towards simulating the quantifiers in $\Phi(\vect x)$. For $1\le
i\le m$ and $\vect y_i=(y_1,\ldots,y_{m_i})$, the component $C_i$ is
obtained as follows:

\vspace*{0.2cm}
\begin{minipage}[h]{0.45\linewidth}%
\begin{itemize}
\item If $Q_i=\exists$, $C_i$ is the component
  \begin{algorithmic}[1]
    \State $\textsc{Exists}[y_1]$
    \State $\textsc{Exists}[y_2]$
    
    \break
    $\vdots$
    \State $\textsc{Exists}[y_{m_i}]$
    \State $C_{i+1}$
  \end{algorithmic}
\end{itemize}
\end{minipage}
\begin{minipage}[h]{0.5\linewidth}
\begin{itemize}
\item If $Q_i=\forall$ then $C_i$ is the component
  \begin{algorithmic}[1]
    \State $\textsc{ForAll}[y_1]:$
    \State \hspace{\algorithmicindent}$\textsc{ForAll}[y_2]:$
    
    \break
    \hspace{\algorithmicindent}$\vdots$
    \State \hspace{\algorithmicindent*2}$\textsc{ForAll}[y_{m_i}]:$
    \State \hspace{\algorithmicindent*3}$C_{i+1}$
  \end{algorithmic}
\end{itemize}
\end{minipage}
\vspace*{0.2cm}

Finally, we let $C=C_0;C_1$ and $C_{m+1}=C_\varphi$. By the properties
established above, it is clear that $C$ admits a
valid run starting with $\f_1$ defined above if and only if $\Phi(\vect x)$ is
valid. Indeed, the sequential composition of the components $C_1$ to
$C_m$ correctly implements the range-bounded alternating
quantification. Every time
$C_\varphi$ is executed, due to $C_\varphi$ being a component it is
guaranteed that it is traversed by a valid run if and only if $\varphi(\vect x,\vect y_1,\ldots,\vect y_m)$
evaluates to true under the assignment encoded in the configuration
when entering $C_\varphi$.

To see that $C$ is a component, it remains to be shown that any valid run of
$C$ cannot perform more than $(K\cdot(2N+12))^M$ zero tests. By the properties of our components, we know that any valid run in $C_{\varphi}$ performs at
most $K\cdot(2N+12)$ zero tests on any
variable. Furthermore, if any valid run of $C_{i+1}$ performs at most
$B$ zero tests for any variable then any valid run in $C_i$ performs
at most the following number of zero tests:
\begin{itemize}
\item $B$ if $Q_i=\exists$; and
\item $N^{m_i} B$ if $Q_i=\forall$.
\end{itemize}
Thus, any valid run of $C$ performs at most $(K\cdot(N+12))^M$
zero tests on any variable. Finally, observe that $C$ itself is fixed by the structure of
$\Phi(\vect x)$. In particular, the only variable part, the values of the free variables $\vect x$ and the bound $N$, are fully encoded in the initial configuration $\f_1$.
\end{proof}

To prove \Cref{thm:main-reduction-unary}, add a loop repeating zero tests at the end of $C$, thus setting the values of the testing counters to zero if and only if the invariant described in \Cref{sec-gadgets} holds. After that, set to zero all the remaining counters (including the hatted ones) by decrementing them in loops. A run in thus constructed counter program is zero-accepting if and only if it is valid.

\section{A universal VASS for polynomial space computations}\label{sec-binaryVASS-new}
The goal of this section is to show that there is 
a fixed 5-VASS whose zero-reachability problem
is PSPACE-hard, provided that the initial
configuration is encoded in binary. Let us first
remark that  we can actually use the techniques developed in
the previous section to prove that for every $i$, there exists a fixed VASS $V_i$ such that deciding zero-reachability for $V_i$ is $\Sigma_i^{\textrm P}$-hard. A result
by Nguyen and Pak~\cite{NguyenP22} shows that for
every $i$, there is a fixed formula of Presburger
arithmetic $\Phi_i$, i.e. $\mathbf{FO}(\N,+)$, such
that deciding $\Phi_i$ is $\Sigma_i^{\textrm P}$-hard.
It is well-known that quantification in Presburger
arithmetic relativises to an initial segment $\underline N$
for some $N\in N$ whose bit length is polynomial in the 
size of $\Phi_i$, which is fixed, see e.g.~\cite{weispfenning1990complexity}. Hence, by
combining the results from~\cite{NguyenP22} with
\Cref{prop:main-reduction-unary}, it is possible
to show that zero-reachability for fixed binary VASS 
is hard for the polynomial hierarchy.
We do not explore this method further because we can actually construct a fixed binary VASS such that the zero-reachability problem is PSPACE-hard for it and which has a smaller number of counters than the fixed binary VASS obtained from showing NP-hardness via the reduction from short Presburger arithmetic outlined above.

We proceed with our construction as follows. We start with the halting problem for Turing machines (TMs) working in polynomial space and show that this problem is PSPACE-hard even if the space complexity of the TM is bounded by the length of its encoding and its input is empty. In \Cref{prop:universal-tm}, we then reformulate the halting problem as follows: given the encoding of such a machine as an input to a universal one-tape TM $\UU$, does $\UU$ accept?

We then use two consecutive simulation. First, we simulate $\UU$ with a $3$-counter automaton $\Aa$ (\Cref{prop:sim-tm-ca}), and then simulate $\Aa$ with an $5$-VASS $\VV$ (\Cref{thm:binary-main}). To be able to apply the technique described in \Cref{sec-gadgets}, we make sure that the space complexity stays polynomial in the size of the input throughout these simulations. This implies that both the upper bound on the value of the counters and the required number of zero tests are polynomial in the size of the input, which enables us to establish a polynomial time reduction.
As a result we obtain a VASS $\VV$ which, in a certain sense, can simulate arbitrary polynomial-space computations.

To provide the reduction, we then show how to transform in polynomial time the input of the problem we started with, the halting problem for polynomial-space TMs, into a zero-reachability query for $\VV$.

\subsection{The halting problem for space-bounded TMs}

The goal of this subsection is to show that there exists a \emph{fixed} polynomial-space TM whose halting problem is PSPACE-complete.

\begin{proposition}[{\cite[Section 4.2]{Arora2009}}]
    The following problem is PSPACE-complete: given a TM~$\MM$, an input word $w$ and a number $n$ encoded in unary, decide if $\MM$ accepts $w$ in at most $n$ space.
\end{proposition}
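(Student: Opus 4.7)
The plan is to prove both containment in PSPACE and PSPACE-hardness. For the upper bound, the decisive observation is that, since $n$ is encoded in unary, its magnitude is at most the length of the input. A deterministic simulator can therefore run $\MM$ step-by-step on a simulated work tape of length exactly $n$, maintaining $O(\log(n\cdot|\MM|))$ additional bits to track $\MM$'s current state and head position, and rejecting as soon as $\MM$ attempts to leave the first $n$ cells. Acceptance is then decidable in polynomial space either by terminating the simulation after $2^{O(n\log|\MM|)}$ steps (the maximum number of reachable configurations) or, equivalently, by phrasing the question as reachability in the polynomially-described configuration graph of $\MM$ and invoking the usual PSPACE procedure.

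For PSPACE-hardness, I would give a generic reduction from an arbitrary PSPACE language $L$. By definition there exist a TM $M_L$ and a polynomial $p$ such that $M_L$ decides $L$ using at most $p(|w|)$ tape cells on every input $w$. The polynomial-time reduction $w\mapsto(M_L,\,w,\,1^{p(|w|)})$ then has the property that its image is a yes-instance of the space-bounded halting problem if and only if $w\in L$. Padding $p(|w|)$ in unary is polynomial-time because $p$ is fixed, and $M_L$ is a fixed string independent of $w$.

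The statement is explicitly attributed to \cite{Arora2009}, so I would invoke their universal simulator construction rather than reconstruct it. There is no real obstacle; if pressed, the point that most deserves care is verifying that the universal simulator uses space polynomial in the \emph{combined} input $(\MM,\,w,\,1^n)$. This reduces to the observation that transitions of $\MM$ can be looked up in space logarithmic in $|\MM|$, which is routine. Hence, the proposition follows from a standard combination of universal simulation and generic reduction, and no subtlety is expected beyond bookkeeping the encoding of $\MM$.
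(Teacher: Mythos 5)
Your proposal is correct and follows the standard textbook argument; the paper itself offers no proof and simply cites Arora--Barak, so there is nothing to diverge from. Both directions are handled appropriately: the upper bound via direct simulation within $n$ cells while counting steps up to the number of configurations (a polynomial-size counter suffices since that number is $2^{O(n\log|\MM|)}$), and hardness via the generic reduction $w\mapsto(M_L,w,1^{p(|w|)})$ from an arbitrary PSPACE language $L$.
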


Note that, using standard arguments, we can without loss of generality assume that $\MM$ always halts. 

We fix some way of encoding, using an alphabet of size at least two, of Turing machines and we denote by $\norm{\MM}$ the length of the encoding of $\MM$, which we call the \emph{size} of $\MM$. Given a TM $\MM$, we say that it is $\norm{\MM}$-space-bounded if on every input it halts using at most $\norm{\MM}$ space. Given $\MM$, an input word $w$ and a number $n$ encoded in unary, it is easy to construct a $\norm{\MM}$-space-bounded TM $\MM'$ such that if $\MM$ accepts $w$ in space at most $n$, then $\MM'$ accepts on the empty input, otherwise $\MM'$ rejects on the empty input. Moreover, the size of $\MM'$ is polynomial in $\norm{\MM}$, $|w|$ and $n$.

Indeed, $\MM'$ can be constructed as follows. When run on the empty input, it writes $w$ on some tape, and then runs $\MM$ treating this tape as the input tape. Additionally, it initialises another tape with $n$ written in unary, and before each step of $\MM$ it checks that the space used by the tape where $\MM$ is simulated does not exceed $n$. If it does, it immediately rejects. It is easy to see that such a TM is $\norm{\MM'}$-space-bounded and satisfies the required conditions.

Hence we get that the following problem is PSPACE-complete: given a $\norm{\MM}$-space-bounded TM $\MM$, does $\MM$ accept on the empty input? Observe that from the construction above we can assume that $\MM$ has a special representation such that the fact that it is $\norm{\MM}$-space-bounded can be checked in polynomial time. 

Let $\UU$ be a one-tape universal TM. This TM has a single read-write tape, which in the beginning contains the input, that is, a description of a TM $\MM$ it is going to simulate. If $\MM$ is $\norm{\MM}$-space-bounded (and represented as mentioned in the previous paragraph), $\UU$ simulates $\MM$ on the empty input in space polynomial in $\norm{\MM}$ \cite[Claim 1.6]{Arora2009}, otherwise $\UU$ rejects. That is, in this space, $\UU$ accepts or rejects depending on whether $\MM$ accepts or rejects the empty word. Hence we get the following proposition.

\begin{proposition}\label{prop:universal-tm}
    There exists a fixed polynomial-space TM $\UU$ such that the question whether $\UU$ halts on a given input is PSPACE-complete.
\end{proposition}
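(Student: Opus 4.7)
The plan is to assemble the proposition from the pieces already in place before the statement. For PSPACE-hardness, I would use the identity reduction on encodings: feeding an $\norm{\MM}$-space-bounded TM $\MM$ directly to $\UU$ yields, by construction of $\UU$, acceptance iff $\MM$ accepts the empty input, which is PSPACE-hard by the immediately preceding paragraph. For membership in PSPACE, since $\UU$ simulates an $\norm{\MM}$-space-bounded $\MM$ in space polynomial in $\norm{\MM}$ by \cite[Claim~1.6]{Arora2009}, and rejects non-well-formed inputs after inspecting a polynomial-size prefix, $\UU$ itself runs in polynomial space in the length of its input, and deciding the behaviour of a polynomial-space machine is in PSPACE by a standard configuration-reachability argument.

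The only mismatch between what has been proved so far and the statement of the proposition is that the preceding problem is phrased in terms of \emph{accepting}, whereas the proposition is phrased in terms of \emph{halting}. I would close this gap by the standard modification of replacing every rejecting transition of $\UU$ with a transition into a dedicated self-loop that uses no additional tape cells, so that the modified $\UU$ halts on an input iff the original $\UU$ would have accepted it. This preserves the polynomial space bound and turns PSPACE-complete acceptance into PSPACE-complete halting. Implicit here is the assumption, already standard, that $\MM$ itself always halts, which can be enforced without loss of generality by augmenting $\MM$ with a step counter of polynomial bit length that aborts after exponentially many steps; this augmentation keeps $\MM$ within $\norm{\MM}$-space (up to a polynomial blow-up absorbed into the definition of ``$\norm{\MM}$-space-bounded'').

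I do not anticipate any substantive obstacle; the argument is essentially bookkeeping on top of the universal TM from \cite{Arora2009} and the PSPACE-completeness just established for $\norm{\MM}$-space-bounded TMs on empty input. The one item worth double-checking is that each auxiliary mechanism---the rejection self-loop, the step counter, and the well-formedness check for the input encoding---stays within space polynomial in the length of the input to $\UU$. This bookkeeping is what will matter in the next subsection, where the polynomial space bound on $\UU$ feeds into the polynomial bound on counter values needed to apply the zero-test framework of \Cref{sec-gadgets} to the subsequent simulation of $\UU$ by a fixed counter automaton and then by a fixed $5$-VASS.
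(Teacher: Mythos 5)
Your approach is essentially the paper's: instantiate a one-tape universal TM that, on input a purported encoding of an $\norm{\MM}$-space-bounded machine, checks well-formedness, simulates $\MM$ on the empty input in polynomial space via \cite[Claim 1.6]{Arora2009}, and rejects otherwise; PSPACE-hardness is then the identity reduction from the acceptance-on-empty-input problem established just above. Your explicit handling of the accept/halt mismatch (routing rejecting transitions to an in-place self-loop, then detecting non-termination in PSPACE by configuration counting) and your step-counter justification of the ``$\MM$ always halts'' assumption are both correct and actually spell out two points the paper only gestures at (``standard arguments,'' and using ``halts'' in the proposition while the surrounding prose says ``accepts''); there is no gap, and no genuinely different route from the paper's.
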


\subsection{From TMs to a counter automata}
\label{subsec-tm-to-ca}

In previous subsection, we obtained a PSPACE-complete problem which already resembles the form of the reachability problem for a fixed counter program: given a fixed polynomial-space TM $\UU$ does it accept a given input? In this section we show how to simulate $\UU$ with a fixed counter automaton $\Aa$ (and how to transform its input accordingly), and in the next section we show how to simulate $\Aa$ with a fixed binary
VASS $\VV$.

Let $\Aa$ be a counter automaton. We say that $\Aa$ is \emph{deterministic} if for every configuration $(q, n_1, \ldots, n_d)$ there is at most one transition that $\Aa$ can take from this configuration. Suppose that $\Aa$ is deterministic, and that its final state $q_f$ does not have any outgoing transitions. Let $\vect{n} = (n_1, \ldots, n_d) \in \NN^d$. We treat~$\Aa$ as an acceptor for such vectors. We say that $\Aa$ works in time $t$ and space $s$ on $\vect{n}$ if the unique run starting in the configuration $(q_0, n_1, \ldots, n_d)$ ends in a state without outgoing transitions, has length $t$, and the bit length of the largest value of a counter along this run is $s$. If this run ends in $q_f$, we say that $\Aa$ \emph{accepts} this vector, otherwise we say that it \emph{rejects} it. In all our constructions we make sure that there are no infinite runs.
Note that, as in the case of TMs, we measure space complexity in the bit length of the values of the counters, and not in their actual values.

Let $\Sigma$ be a finite alphabet. Let us bijectively assign a natural number to each word over~$\Sigma$ as follows. First, assign a natural number between $1$ and $|\Sigma|$ to each symbol in~$\Sigma$. Then $w$ can be considered as a number in base $|\Sigma| + 1$, with the least significant digit corresponding to the first letter of $w$. We denote this number by $\num(w)$.

Let $\MM$ be a TM, and $w$ be its input. We can transform $w$ into a vector $(\num(w), 0, \ldots, 0)$, which will constitute an input of a deterministic counter automaton $\Aa$. We say that $\Aa$ \emph{simulates} $\MM$ if a word $w$ is accepted by $\MM$ if and only if the corresponding vector is accepted by $\Aa$. Furthermore, we say that this simulation is in polynomial space if there exists a polynomial $p$ such that on every input, the space complexity of $\Aa$ is at most $p(S)$, where $S$ is the space complexity of $\MM$ on this input.

The proof of the following proposition uses the techniques described in the proofs of \cite[Theorem~4.3(a)]{Fischer1968} and \cite[Theorem~2.4]{Greibach1976}.

\begin{proposition}
\label{prop:sim-tm-ca}

For every one-tape TM $\MM$, there exists a deterministic $3$-counter automaton $\Aa$ that simulates it in polynomial space.
\end{proposition}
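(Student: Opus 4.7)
The plan is to follow the Fischer--Greibach-style encoding alluded to just before the statement: store the tape of $\MM$, split at the head position, across two of the three counters as radix-$k$ integers with $k=|\Sigma|+1$, and use the third counter purely as a scratch register for the arithmetic. Concretely, if $\MM$'s tape reads $u\,a\,v$ with the head on the first symbol of $av$ (and $u$, $v$ padded with blanks), I would store $\num(av)$ in counter $c_1$ and $\num(\overline{u})$ in counter $c_2$, where $\overline{u}$ is the reverse of $u$. This way the currently scanned symbol is the least significant digit of $c_1$ in base $k$, and the symbol immediately to the left of the head is the least significant digit of $c_2$. The control state of $\Aa$ bundles $\MM$'s finite control with the local program counter of whichever arithmetic subroutine is currently running, and the initial input $(\num(w),0,0)$ already matches the intended encoding of $\MM$'s initial configuration on $w$, so no preamble is needed.

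On top of this encoding I would build three deterministic subroutines, each of which manipulates $c_1$, $c_2$ using $c_3$ only as scratch and restores $c_3$ to zero on exit. $\textsc{Read}(c_1)$ extracts $c_1 \bmod k$ into the finite control while preserving $c_1$ (repeatedly subtract $k$ from $c_1$ while incrementing $c_3$ until a zero-test fires, remembering the leftover remainder in the finite state, then transfer $c_3$ back into $c_1$ scaled by $k$). $\textsc{ShiftRight}(a')$ first overwrites the least significant digit of $c_1$ by the newly written symbol $a'$ using the remainder already held in the finite control, then divides $c_1$ by $k$, then multiplies $c_2$ by $k$ and adds the old scanned digit; its mirror $\textsc{ShiftLeft}(a')$ does the analogous thing in the opposite direction. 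Deterministic branching is realised in the usual way for loops: from a loop head one outgoing edge is an update transition that decrements some counter (and blocks when the counter is zero), and the other is a zero-test transition exiting the loop, so exactly one transition is enabled in each reachable configuration.

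A single step of $\MM$ is then simulated by $\textsc{Read}$, followed by a finite-control lookup in $\MM$'s transition function returning $(a',q',D)$ with $D\in\{L,R\}$, followed by $\textsc{ShiftRight}(a')$ or $\textsc{ShiftLeft}(a')$, and finally a transition to the state encoding $q'$. The accepting halting state of $\MM$ routes to the designated $q_f$ of $\Aa$ (which has no outgoing transitions); every other halting state routes to a distinct dead sink. For the space bound, if $\MM$ uses space $S$ on $w$ then both halves of its tape have length at most $S$, so $c_1$ and $c_2$ stay below $k^{S+1}$, and $c_3$ is bounded similarly inside each subroutine. The bit length of every counter is therefore $O(S\log k)=O(S)$, i.e.\ polynomial in the space complexity of $\MM$, as required.

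The main technical nuisance is making the construction \emph{globally} deterministic and free of infinite runs: every loop must be guarded by a zero-test that is guaranteed to fire, every state must have exactly one applicable transition in every reachable configuration, and each subroutine must re-establish the boundary invariant $c_3=0$ before yielding control. This is routine loop-by-loop but is what forces the split of ``subtract $k$'' into a zero-test branch plus a blocking update, and what forces the $\textsc{Read}$ and $\textsc{Shift}$ subroutines to come in matched ``deconstruct and rebuild'' halves. Once these invariants are in place, a direct induction on the number of simulated steps of $\MM$ shows that $\Aa$ accepts $(\num(w),0,0)$ if and only if $\MM$ accepts $w$, which is exactly the claimed simulation.
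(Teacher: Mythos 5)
Your proposal follows essentially the same construction as the paper's proof: two counters hold the left and right halves of the tape as base-$(|\Sigma|+1)$ integers (left half reversed), the third counter is used purely as a scratch register for multiplying and dividing by the radix, and determinism of each loop is obtained by pairing a blocking decrement with a zero-test exit. The one substantive variant is that you keep the scanned symbol as the low-order digit of $c_1$ and extract it on demand with $\textsc{Read}$, whereas the paper holds it directly in the finite control; both work. One slip worth fixing: in $\textsc{ShiftRight}(a')$, after dividing $c_1$ by $k$ the digit you append to $c_2$ must be the newly written symbol $a'$ (which is now to the left of the head), not ``the old scanned digit'' $a$; as phrased, the step would reconstruct the wrong tape content, and the overwrite of $c_1$'s low digit before dividing is in any case redundant since the quotient discards it.
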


\begin{proof} The idea of the proof is as follows. Two counters of $\Aa$, call them $\ell$ and $r$, represent the content of the tape of $\MM$ to the left and to the right of the reading head. They are encoded similarly to the way we encode the input word. Namely, let $w_1aw_2$, where $w_1, w_2 \in \Sigma^*$ and $a \in \Sigma$ be the content of the tape at some moment of time, with the working head in the position of the letter $a$. Denote by $w_1^R$ the reversal of the word $w$. Then $\ell$ stores $\num(w_1^R)$, $r$ stores $\num(w_2)$, and $a$ is stored in the finite memory of the underlying finite automaton. 

Now, to make a step to the left, we do the following. First, we need to add $a$ to the end of the word encoded by the value of $r$. This is done by multiplying the value of $r$ by $|\Sigma| + 1$ and adding $\num(a)$ to it. Next, we need to extract the last letter of the word encoded by the value of $\ell$, and remove this letter. To do so, we do the opposite of what we did for $r$: this letter is the residue of dividing the value of $\ell$ by $|\Sigma| + 1$, and the new value of $\ell$ is the result of this division. 

The reason we need the third counter $x$ is to perform these multiplications and divisions. Namely, to divide the value of a counter $\ell$ by a constant $c$, we repeat the following until it is no longer possible: subtract $c$ from the value of $\ell$ and add one to the value of $x$. When the value of $\ell$ becomes smaller than $c$, we get the result of the division in the counter $x$, and the remainder in $\ell$. Multiplication by a constant is done similarly.

Observe that by construction the largest value of a counter of $\Aa$ at any moment of time is at most $(|\Sigma| + 1)^S$, where $S$ is the maximal amount of space $\MM$ uses on given input. Hence $\Aa$ works in space polynomial in the length of the input. 
\end{proof}

By simulating $\UU$ from   \Cref{prop:universal-tm} with a counter automaton $\Aa$, we get the following statement.

\begin{corollary}\label{corr:fixed-ca} There exists a fixed $3$-counter automaton $\Aa$ working in polynomial space (in the length of the input) such that the zero-reachability problem for it is PSPACE-complete.
\end{corollary}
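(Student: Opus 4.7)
The plan is to chain together the two results already established in this section: start from the fixed universal TM $\UU$ of \Cref{prop:universal-tm}, whose halting problem is PSPACE-complete, apply \Cref{prop:sim-tm-ca} to simulate it by a deterministic $3$-counter automaton in polynomial space, and then massage the resulting automaton so that acceptance coincides with reaching the final state with all counters equal to zero.

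Concretely, I would first apply \Cref{prop:sim-tm-ca} to $\UU$ to obtain a fixed deterministic $3$-counter automaton $\Aa_0$, with final state $q_f^0$, that simulates $\UU$ in polynomial space. Since $\UU$ always halts, so does $\Aa_0$, and along its unique run the bit length of each of the three counters $\ell, r, x$ stays polynomial in the length of the input. At the moment $\Aa_0$ reaches $q_f^0$, the counters still hold the polynomially-bit-bounded values encoding the tape content of $\UU$ at acceptance. I would therefore append a short deterministic cleanup gadget that, from $q_f^0$, decrements $\ell$, then $r$, then $x$ down to zero in loops, and finally enters a fresh final state $q_f$ with no outgoing transitions. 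Call the resulting fixed $3$-counter automaton $\Aa$. It is still deterministic, still runs in polynomial space, still halts on every input, and by construction $(q_f, 0, 0, 0)$ is reachable from $(q_0, n, 0, 0)$ if and only if $\Aa_0$ accepts $(n, 0, 0)$, i.e.\ iff $\UU$ halts and accepts the word with $\num$-encoding $n$.

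For PSPACE-hardness, given an input word $w$ to $\UU$, I would produce in polynomial time the initial configuration $(q_0, \num(w), 0, 0)$; since $\num(w) < (|\Sigma|+1)^{|w|}$, its binary representation has length linear in $|w|$, so the map $w \mapsto \num(w)$ is a polynomial-time reduction from the PSPACE-hard halting problem for $\UU$ to zero-reachability for the fixed $\Aa$. For the matching PSPACE upper bound, I would exploit that $\Aa$ is deterministic, halts on every input, and every configuration along its run has bit length polynomial in the input; hence one can decide zero-reachability simply by simulating $\Aa$ step by step, storing only the current configuration, which requires polynomial space.

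The only genuine obstacle is bookkeeping rather than a conceptual difficulty: checking that the cleanup gadget can be attached deterministically without breaking the polynomial space bound established in \Cref{prop:sim-tm-ca}, and that the transformation $w \mapsto \num(w)$ preserves polynomial size and is polynomial-time computable. Both are routine given the explicit base-$(|\Sigma|+1)$ encoding used in the simulation and the halting guarantee on $\UU$ inherited from the construction preceding \Cref{prop:universal-tm}.
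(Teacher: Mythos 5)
Your proposal is correct and follows the same route as the paper: simulate the fixed universal TM $\UU$ of \Cref{prop:universal-tm} by the $3$-counter automaton of \Cref{prop:sim-tm-ca}, then argue hardness via the input map $w \mapsto \num(w)$ and membership via space-bounded deterministic simulation. The paper compresses this into a single sentence, so you fill in the details it leaves implicit — in particular the cleanup gadget that zeroes the counters after reaching the accepting state (needed because the corollary is about \emph{zero}-reachability) and the PSPACE upper bound — and these additions are correct and worth stating.
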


The same question can be asked about fixed $2$-counter automata. Informally speaking, such automata are exponentially slower than $3$-counter automata: the known simulation requires
storing the values of the three counters $x, y, z$ as $2^x 3^y 5^z$ \cite{Minsky1967}. They are also less expressive: for example, 2-counter automata cannot compute the function $2^n$ \cite{Schroeppel1972}, while for $3$-counter automata this is trivial. It is worth noting the developments of the next
subsection imply that a lower bound for fixed 2-counter
automata translates into a lower bound for fixed 4-VASS.

\subsection{From counter automata to VASS}
\label{subsec:ca-to-cp}

To go from a counter automaton to a VASS, we need to simulate zero tests with a VASS. In general, this is not possible. However, the counter automaton in Corollary \ref{corr:fixed-ca} is space bounded, so we can deduce from it PSPACE-hardness of the zero-reachability problem in fixed $8$-VASS using the technique described in Section \ref{sec-gadgets}. However, using a more advanced technique of quadratic pairs described in \cite{Czerwinski2022} allows us to deduce the same result for $5$-VASS. Namely, a slight variation of \cite[Lemma 2.7]{Czerwinski2022} states that given a $3$-counter automaton $\Aa$ working in polynomial space, one can construct a $5$-VASS $\VV$ such that fixed zero-reachability in $\Aa$ can be reduced in polynomial time to fixed zero-reachability in~$\VV$. Indeed, the fact that $\Aa$ works in polynomial space allows us to initialise the counters of $\VV$ to account for enough zero tests, since it implies that if there exists an accepting run in $\Aa$, then there is a such a run with polynomial number of zero tests and polynomial upper bound on the sum of the values of the counters along the whole run. 


\begin{theorem}\label{thm:binary-main}
    There exists a fixed $5$-VASS such that the zero-reachability problem for it is PSPACE-hard if the input
    configuration is given in binary. 
\end{theorem}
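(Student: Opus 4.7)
The plan is to reduce zero-reachability for the fixed $3$-counter automaton $\Aa$ of Corollary~\ref{corr:fixed-ca} to zero-reachability for a fixed $5$-VASS $\VV$ in polynomial time. The key ingredient is the quadratic-pair construction from Section~\ref{sec-gadgets}, sharpened as in \cite[Lemma 2.7]{Czerwinski2022}: three of the counters of $\VV$ will simulate the three counters of $\Aa$, while the remaining two counters $(u_1, u_2)$ act as the testing pair maintaining the invariant $\f(u_2) = \f(u_1) \cdot N$, where $N$ is an a priori polynomial upper bound on the counter values along any run of $\Aa$. Since all three counters of $\Aa$ share the same polynomial bound $N$, they can share a single testing pair, keeping the overall counter budget at five.

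Concretely, given an input $\MM$ to the universal TM $\UU$, the reduction first translates $\MM$ into the initial counter configuration $(\num(\MM), 0, 0)$ of $\Aa$ in the sense of Proposition~\ref{prop:sim-tm-ca}. Next, using the polynomial space bound on $\Aa$, it computes in polynomial time numbers $N$ bounding the counter values and $T$ bounding the total number of zero tests performed along any run of $\Aa$ from this initial configuration (for $T$, one uses that any halting deterministic run avoids repeated configurations and so has length polynomial in the number of control states times $N^3$). It then produces $\VV$ by replacing every zero-test transition on counter $c$ by an inlined $\Zero{c}$ gadget, pairing every $c \pluseq k$ with $\hat{c} \minuseq k$ so that the invariant $\f(c) + \f(\hat{c}) = N$ is preserved, and using the encoding of \cite[Lemma 2.7]{Czerwinski2022} to avoid allocating separate physical counters for the hat variables. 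The initial configuration of $\VV$ is then set so that each working counter $c_i$ carries its prescribed value, each hat complement carries $N$ minus that value, and $(\f(u_1), \f(u_2)) = (2T, 2T \cdot N)$; since $\num(\MM)$, $N$, and $T$ all have bit length polynomial in $|\MM|$, this configuration has polynomial-size binary representation.

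The main obstacle is verifying correctness of the zero-test simulation within the tight five-counter budget. By the analysis of Section~\ref{sec-gadgets}, any zero-terminating run of $\VV$ must be \emph{valid}, i.e.\ every invocation of $\Zero{c}$ must have been entered with $\f(c) = 0$; consequently zero-terminating runs of $\VV$ are in bijective correspondence with accepting runs of $\Aa$ from $(\num(\MM), 0, 0)$, which in turn correspond to halting computations of $\UU$ on $\MM$. The nontrivial point is that this correspondence survives the counter fusion used to collapse the eight ``naive'' counters (three working, three hat, two testing) down to five; here we rely on the quadratic-pair refinement of \cite{Czerwinski2022}, whose polynomial-space hypothesis on $\Aa$ is exactly what ensures that the initial stock of testing capacity $2T \cdot N$ remains polynomial in $|\MM|$ and hence admits a polynomial-size binary encoding. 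Combining this with Proposition~\ref{prop:universal-tm} yields the claimed PSPACE lower bound.
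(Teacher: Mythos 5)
Your proposal follows essentially the same route as the paper: reduce from the fixed polynomial-space $3$-counter automaton $\Aa$ of Corollary~\ref{corr:fixed-ca}, invoke the quadratic-pair construction of \cite[Lemma~2.7]{Czerwinski2022} to obtain a fixed $5$-VASS, and check that the polynomial space bound of $\Aa$ guarantees that the initial configuration---in particular the testing-counter stock---has bit length polynomial in $|\MM|$, so the reduction runs in polynomial time. You make the bound on the number of zero tests a bit more explicit than the paper does (noting that a halting deterministic run has length at most $|Q|\cdot(N+1)^3$), which is a welcome clarification. One small caveat: your narrative description of \emph{how} \cite{Czerwinski2022} collapses the eight ``naive'' counters to five (pairing hat variables and then ``avoiding allocating separate physical counters'' for them, while still initialising hat complements to $N$ minus the working value) is internally inconsistent and is not an accurate account of the quadratic-pair mechanism; but since both you and the paper ultimately defer to the cited lemma for precisely this counter budget, this is an expository slip rather than a gap in the argument.
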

\comment{
\begin{proof}
    The proof of this proposition follows the construction in the proof of \cite[Lemma 5]{CO21}.

    Let $(n_1, n_2, n_3)$ be an input vector to $\Aa$, the $3$-counter automaton from Corollary \ref{corr:fixed-ca}, and $S(x_1, x_2, x_3)$ be a polynomial upper bounding the space complexity of $\Aa$. Take $N = S(n_1, n_2, n_3)$.

    Recall from Section \ref{sec-gadgets} that, to simualate zero tests, for each counter $x$ of $\Aa$, we define two counters $x$ and $\hat{x}$ such that the sum of their values is always $N$. Two more counters are used as testing counters. The VASS $\VV$ is constructed by replacing each zero testing transition of $\Aa$ with a gadget described in Section \ref{sec-gadgets}. 

    It remains to check that the testing counters can be initialised properly. That is, since we need a polynomial-time reduction, the initial values of the tesing counters can be at most polynomial in $n_1, n_2, n_3$. The values of the testing counters polynomially depend on the upper bound $N$ (which is already verified) and the number of zero tests. But the number of performed zero tests in any successful run is at most time complexity of $\Aa$, and is hence upper-bounded by a polynomial of $N$, which concludes the proof.
\qed \end{proof}
}


\comment{
\subsection{NL-hardness}

Let us start with a simulation of a one-way one-tape NFA using a VASS with four counters. Let w be the input word over $\{a, b\}$. We encode it as a number num(w) in binary encoding by reversing w and then substituting a with 01 and b with 10. Hence, abb will become 101001. Assume wlog that w ends with b, and hence num(w) has length exactly $2|w|$.

Reading w will be done as follows. Initially, counter x is initialised with num(w) and y with 0. To read a letter of w, we first divide x by 2 by making a lot of -2 decrements to x and +1 increments to y. Then we non-deterministically decide if the letter is a (and then we need to decrement x by 1) or b. Then we do the same thing to y, dividing its value by 2 and getting the result back in x. If previously we decided that the letter is b, we additionally decrement y by 1.

Clearly, there exists a run which reads w letter by letter in the correct way. Now we need to make sure that nothing else can happen.

First, we introduce a counter n initialised by $|w|$. It will make sure that only $|w|$ iterations are made. Observe that (x, y) cannot reach (0, 0) in less than $|w|$ iterations.

The most important part is now to make sure that at each step all possible -2 decrements are performed, and each non-deterministic guess of a new letter is correct. Intuitively, we achieve that because the correct run is the only run with a maximum total value of -2 decrements.

Let us first compute the total value of -2 decrements in the correct run. Consider the word w = abb as an example. Then the correct run is as follows: the configurations of (x, y) start with (101001, 0) $\to$ decrement x by 101000, then decrement it by 1 since we read a $\to$ (0, 10100) $\to$ decrement y by 10100 $\to$ (1010, 0) $\to$ ... Hence the total value of -2 decrements is the sum of all prefixes of num(w) where the last digits of each prefix is changed to zero: 101000 + 10100 + 1010 + 100 + 10. Denote this value p(w). It is not difficult to see that it can be computed in logspace by computing this sum digit by digit.

The value p(w) can be seen as the ``potential'' of the system. We claim that the only run which reaches (0, 0) in $|w|$ iterations is the run where the total value of -2 decrements is p(w), and all other runs have a smaller such value. This fact will be controlled by the fourth counter.

There are two bad things that can happen during a single iteration of a run. First, the number of -2 decrements from x to y or from y to x can be smaller than required. Assume that we did less -2 decrements to x than needed. Then the value of y after that is smaller than it should be. This means that the total value of -2 decrements we can do is not maximally possible: we are able to do more -2 decrements to x later (the same amount that we skipped), but since the value of y is smaller, we already lost at least one -2 decrement that could have been done to y.

Hence, every time we indeed divide the value of xor y by 2. Second possible problem then is that  the guess of the current letter can be incorrect. But each incorrect guess also decreases the total value of -2 decrements by 1. At the same time, it leaves an extra value of only 1, which is not enough to compensate for the lost -2 decrement: having two such extra 1's, we can do an additional -2 decrement, but for doing that we lost already two of them. In other words, the values at the next iteration will be smaller (???).

\begin{algorithm}[H]
\caption{$\textsc{Step}$}
\begin{algorithmic}[1]
\Loop
\State $x\minuseq2;y\pluseq1$
\EndLoop
\State $x\minuseq1\textbf{ or }y\minuseq1$
\Loop
\State $x\pluseq1;y\minuseq2$
\EndLoop
\end{algorithmic}
\end{algorithm}

\begin{lemma}\label{lm:one run}
Let $n$ be a string formed from block of $01$ and $10$, starting with $10$. There exists a single run of $\textsc{Step}$ that starts in a configuration $val(x)=n,val(y)=0$ and ends in a configuration $val(x)=n',val(y)=0$ such that $n'=n\gg2$ and any other run ends in a configuration where $val(x)+val(y)>n\gg2$, for any $n\ge2$. 
\end{lemma}
\begin{proof}
We start with a case split on the last block of $n$. The lemma is trivial for the case $n=10$.
\begin{itemize}
\item $n=n'01$. Taking the loop on line~1 a maximal number of times leads to a configuration $val_3(x) = 1$ and $val_3(y) = n'0$. Observe that if we execute $y\minuseq1$ on line~1 we can only use the loop on line~4 $n'-1$ many times. However, if we execute $x\minuseq1$, we can execute the loop on line~4 $n'$ many times and get to configuration $val_{end}(x)=n',val_{end}(y)=0$. Also, executing any of the two loops less times necessarily leads to a final configuration in which the sum $val_{end}(x)+val_{end}(y)$ is greater because every execution of lines~2 and 4 decreases the value of the sum by 1.  
\item $n=n'10$. Similarly, taking the loop on line~1 a maximal number of times leads to a configuration $val_3(x) = 0$ and $val_3(y) = n'1$. Thus, in this case we are forced to execute $y\minuseq1$. Now, executing the loop on line~4 leads to a final configuration $val_{end}(x)=n',val_{end}(y)=0$. Again, executing any of the two loops less times necessarily leads to a final configuration in which the sum $val_{end}(x)+val_{end}(y)$ is greater.
\end{itemize}
\end{proof}
\begin{lemma}
There is no run of \textsc{Step} that starts in a configuration with $val_0(x)+val_0(y)=n$ and ends in a configuration in which $val_{end}(x)+val_{end}(y)<n\gg2$ when $n>2$.
\end{lemma}
\begin{proof}
For a run $R$ let $l_1(R)$ and $l_2(R)$ be the number of times $R$ executes the loops on line~1 and line~4 respectively. Observe that $R$ decreases the sum of the two counters by $l_1(R)+l_2(R)+1$. Consider an initial configuration $val_1(x)=a,val_1(y)=b$ such that $a+b=n$. For any $R$, we have the following inequalities: $l_1(R)\le\lfloor a/2\rfloor$ and $l_2(R)\le \lfloor(l_1(R)+b)/2\rfloor\le\lfloor (\lfloor a/2\rfloor+b)/2\rfloor$, so $l_1(R)+l_2(R)+1\le \lfloor a/2\rfloor+\lfloor (\lfloor a/2\rfloor+b)/2\rfloor+1$. Thus, the maximum value of $l_1(R)+l_2(R)+1$ can be achieved by starting in a configuration $val_1(x)=n,val_1(y)=0$, so we only discuss this case because a run starting in any other configuration cannot decrease the sum of the two counters by more than an optimal run starting in configuration $val_1(x)=n,val_1(y)=0$.

We now proceed by a case split.
\begin{itemize}
\item $n=n'01$ or $n=n'10$. Follows from \Cref{lm:one run}.
\item $n=n'00$. Let $R$ be a run of \textsc{Step}. We have that $l_1(R)\le n'0$. Also, we have to execute an instruction on line~3, so $l_2(R)< n'$, so $l_1(R)+l_2(R)+1\le n'0+n'$ and thus $R$ cannot reach a configuration in which $val_{end}(x)+val_{end}(y)<n'$.
\item $n=n'11$. Again, let $R$ be a run of \textsc{Step}. We have that $l_1(R)<n'1$. Here, we can execute any of the two instructions on line~3 and we get that $l_2(R)< n'$, so $l_1(R)+l_2(R)+1\le n'0+n'$ and thus $R$ cannot reach a configuration in which $val_{end}(x)+val_{end}(y)<n'$.
\end{itemize}
\end{proof}

\subsection{Simulation of pushdown automata}

The proof of Proposition \ref{prop:sim-tm-ca} shows that one can simulate certain computational devices using a counter automaton with two counters instead of three. Namely, if we get rid of the counter $r$, we get an automaton which can store a word in its memory, look at the last letter of this word, remove it or add another letter to the end of it. This is precisely the definition of a pushdown automaton. If use a $2$-counter automaton, following the simulation described in Subsection \ref{subsec:ca-to-cp}, we can get lower bounds for $6$-counter programs, which is the goal of this subsection.

The main property of the obtained $2$-counter automaton is that it works in polynomial time and space. This is due to the result ... stating that for every non-deterministic PDA if a word is accepted, there is an accepting run where the height of the stack is polynomial. \AR{TODO: check that and fill the gaps}.

\begin{proposition}
\label{prop:sim-pda-ca}
A pushdown automaton can be simulated with a $2$-counter automaton in polynomial time and space.
\end{proposition}

\subsubsection{P-hardness result}

Step 1: membership (word problem) is P-hard for CFGs in Greibach normal form.

Step 2: logspace reduction from non-fixed membership to the Greibach's fixed hardest CFG.

Step 3: using the result above for the PDA obtained from the CFG from the previous step.

Side result: halting problem for poly-time and poly-space 2-counter automata is P-hard? Gap between P-hardness and PSPACE?

\subsubsection{Fine-grained complexity}

\begin{theorem}[\cite{Abboud2018}] There is CFG $G_C$ of constant size such that if we can determine whether a string of length n can be obtained from $G_C$ in $T(n)$ time, then $k$-Clique on $n$-node graphs can be solved in $O(T(n^{k/3+1}))$ time for any $k \geq 3$.
Moreover, the reduction is combinatorial.
\end{theorem}

All our reductions (which are just transformations of the input word into numbers) are linear-time, hence we get that the same fine-grained complexity lower bounds apply to reachability in fixed $6$-counter programs.

}

\bibliographystyle{alpha}
\bibliography{biblio}

\appendix
\section{More arithmetical components}

\subsection{Component for the negation of the addition predicate}\label{app:not addition}

We define the $\neg\textsc{Addition}[x,y,z]$ component that checks if the value stored in counter $z$ is different to the sum of the values stored in the counters $x,y$. Formally, $\neg\textsc{Addition}[x,y,z]$
\begin{enumerate}
\item admits a valid run if and only if $val_1(z)\neq val_1(x)+ val_1(y)$;
\item $\neg\textsc{Addition}[x,y,z]$ is a component; and
\item the effect of $\neg\textsc{Addition}[x,y,z]$ is zero on counters $x,y,z$.
\end{enumerate}
The component $\neg\textsc{Addition}[x,y,z]$ is now defined as follows:
\begin{algorithm}[H]
\caption{$\neg\textsc{Addition}[x,y,z]$}
\begin{algorithmic}[1]
\State $\textsc{Copy}[x,x'];\textsc{Copy}[y,y'];\textsc{Copy}[z,z']$
\Loop
\State $z'\minuseq1$
\State $x'\minuseq1\textbf{ or }y'\minuseq1$
\EndLoop
\State $\Gotoo{6}{9}$
\ccomment{Check if $z'=0$ and at least one of $x',y'$ are greater than zero}
\State $\Zero{z'}$
\State $x'\minuseq1\textbf{ or }y'\minuseq1$
\State $\Goto{11}$
\ccomment{Check if $x'=y'=0$ and $z'$ is greater than zero}
\State $\textbf{zero? } x',y'$
\State $z'\minuseq1$
\State $\textbf{skip}$
\end{algorithmic}
\end{algorithm}
The instructions on Lines~2-4 are the same as in the $\textsc{Addition[x,y,z]}$ component, but on Line~5 a valid run must either guess that the value of $z'$ is zero and one of the values of $x'$ or $y'$ are greater than zero, or that the values of $x',y'$ are zero and the value of $z'$ is greater than zero which happens if and only if $val_1(z)\neq val_1(x)+val_1(y)$. Again, $\neg\textsc{Addition[x,y,z]}$ is a component because any valid run performs at most 11 zero tests. The last property is true based on the properties of $\textsc{Copy}$.

\subsection{Component for the negation of the multiplication predicate}\label{app:not multiplication}

The properties of $\neg\textsc{Multiplication}[x,y,z]$ are:
\begin{enumerate}
\item It admits a valid run if and only if $val_1(z)\neq val_1(x)\cdot val_1(y)$;
\item it is a component; and
\item it's effect is zero on counters $x,y,z$.
\end{enumerate}
We define it as the counter program:
\begin{algorithm}[H]
\caption{$\neg\textsc{Multiplication}[x,y,z]$}
\begin{algorithmic}[1]
\State $\textsc{Copy}[x,x'];\textsc{Copy}[y,y'];\textsc{Copy}[z,z']$
\Loop
\Loop
\State $\Gotoo{5}{14}$
\State $x'\minuseq1;t\pluseq1;z'\minuseq1$
\EndLoop
\State $\Zero x'$
\Loop
\State $x'\pluseq1;t\minuseq1;$
\EndLoop
\State $\Zero t$
\State $y'\minuseq1$
\EndLoop
\ccomment{$y'=0$ and $z'>0$}
\State $\textbf{zero? }y'$
\State $z'\minuseq1$
\State $\Goto{16}$
\ccomment{$z'=0$ and $y'>0$}
\State $\textbf{zero? }z'$
\State $y'\minuseq1$
\State $\textbf{skip}$
\end{algorithmic}
\end{algorithm}
Assume that $\neg\textsc{Multiplication}[x,y,z]$ admits a valid run $\pi$. This run must either branch to Line~11 or to Line~14. 
\begin{itemize}
    \item Assume that $\pi$ branches to Line~11. Thus, we can consider that $\pi$ ignores the instruction on Line~4. Observe that Lines~2-10 are exactly the same as in the $\textsc{Multiplication}[x,y,z]$ component apart from Line~4. So, $\pi$ reaches and passes the zero test on Line~11 if and only if $val_1(z)\ge val_1(x)\cdot val_1(y)$, thus, $\pi$ reaches the terminal instruction if and only if $val_1(z)> val_1(x)\cdot val_1(y)$.
    \item Now, assume that $\pi$ branches to Line~14. The instruction on Line~15 can only be executed if the loop on Line~2 was taken at most $val_1(y)-1$ many times. Thus, $\pi$ passes the zero test on Line~14 and executes Line~15 if and only if $val_1(z)< val_1(x)\cdot val_1(y)$. 
\end{itemize} 
Adding the two cases, we obtain that $\neg\textsc{Multiplication}[x,y,z]$ admits a valid run if and only if $val_1(z)\neq val_1(x)\cdot val_1(y)$. Finally, the loop on Line~2 is executed at most $val_1(y)$ many times, so $\neg\textsc{Multiplication}[x,y,z]$ can perform at most $2N+9$ zero tests and thus, it is a component. Finally, the last property is ensured by the properties of $\textsc{Copy}$.

\end{document}